\newcommand{\A}{\mathcal{A}}
\newcommand{\B}{\mathcal{B}}
\renewcommand{\b}{\mathbf{b}}
\renewcommand{\v}{\mathbf{v}}
\newcommand{\q}{\mathbf{q}}
\newcommand{\s}{\mathbf{s}}
\renewcommand{\t}{\mathbf{t}}
\newcommand{\bi}[1]{\widehat{#1}}
\newcommand{\w}{\mathbf{w}}
\renewcommand{\v}{\mathbf{v}}
\newcommand{\set}[1]{\left\{ #1 \right\}}
\newcommand{\tuple}[1]{\left( #1 \right)}
\newcommand{\linCE}{\mathit{NC}}
\newcommand{\logCE}{\mathit{KC}}
\newcommand{\taxCE}{\mathit{TC}}
\newcommand{\triw}{\mathbf{1}}
\DeclareMathOperator*{\argmax}{arg\,max}
\begin{document}
\title{Condorcet Markets}
%
%
\author{St\'{e}phane Airiau\inst{1}
\orcidID{0000-0003-4669-7619} 
\and
Nicholas Kees Dupuis \and
Davide Grossi\inst{2,3}
\orcidID{0000-0002-9709-030X}
}
\authorrunning{Airiau, et al.}
%
\institute{
University Paris-Dauphine, France \\
\email{stephane.airiau@dauphine.fr}
\and
University of Groningen, The Netherlands \\
\email{d.grossi@rug.nl}
\and
University of Amsterdam, The Netherlands
}
\maketitle              
\begin{abstract}
The paper studies information markets concerning single events from an epistemic social choice perspective. Within the classical Condorcet error model for collective binary decisions, we establish equivalence results between elections and markets, showing that the alternative that would be selected by weighted majority voting (under specific weighting schemes) corresponds to the alternative with highest price in the equilibrium of the market (under specific assumptions on the market type). This makes it possible in principle to implement specific weighted majority elections, which are known to have superior truth-tracking performance, by means of information markets without needing to elicit voters' competences. 

\keywords{Information markets  \and Jury theorems \and Crowd-wisdom.}
\end{abstract}
%
%
%



\section{Introduction}

Information, or prediction, markets are markets of all-or-nothing contracts (so-called Arrow securities) that pay one unit of currency if a designated event occurs and nothing otherwise (see \cite{alos-ferrer05asset,beygelzimer2012learning,kets14betting} for models of such markets). Under the view, inspired by \cite{hayek45use}, that markets are good aggregators of the information dispersed among traders, proponents of information markets have argued that equilibrium prices are accurate estimates of the probability of the designated event. Recent research---theoretical and empirical---has probed this interpretation of prices in information markets, finding that equilibrium prices successfully track the traders' average belief about the event, under several models of trader's utilities \cite{pennock_thesis,wolfers04prediction}. 

In this paper we address a closely related, but different question: {\em if a decision maker takes a decision based on the information they extract from the equilibrium price of the market, how accurate would the decision be?} Therefore, rather than relating equilibrium prices to the aggregation of traders' beliefs, we relate them directly to the quality of the decision they would support. We frame the above question within the standard binary choice framework of epistemic social choice, stemming from the Condorcet jury theorem tradition \cite{Condorcet1785,grofman83thirteen,young88condorcet} and the maximum-likelihood estimation approach to voting \cite{grofman83thirteen,conitzer05common,pivato13voting,elkind16rationalizations}. 

\begin{wrapfigure}{r}{5cm}
\vspace{-0.6cm}
\begin{center}
\begin{tikzpicture}[node distance=3cm, auto]
    \node (1) {beliefs};
    \node (2) [below of=1] {votes};
    \node (3) [right of=1] {price};
    \node (4) [below of=3] {decision};
    \draw[->] (1) to node {\rotatebox{-90}{binarize}} (2);
    \draw[->] (3) to node {\rotatebox{-90}{binarize}} (4);
    \draw[->] (1) to node {market} (3);
    \draw[->] (2) to node {election} (4);
\end{tikzpicture}
\end{center}
\caption{Elections and information markets commute.}
\vspace{-0.4cm}
\label{fig:commute}
\end{wrapfigure}
\paragraph{Contribution} To answer the above question, we study information markets when traders' beliefs are obtained by Bayesian update from a private independent signal with accuracy known to the trader, just like in the classic jury theorems setting. In other words, we study `jurors' as if they were `traders' who, instead of relaying their vote to a central mechanism, trade in an information market. In taking this perspective, we compare the decisions that would be taken based on the equilibrium price of an information market, with the decisions that would be taken by specific weighted majority elections, whose truth-tracking behavior is already well-understood \cite{grofman83thirteen}. Specifically, we aim at identifying correspondences between classes of markets and of weighted majority elections which are equivalent from a decision-making point of view. That is, if agents vote according to the event they believe more likely and aggregate these votes by weighted majority, then they identify the alternative whose Arrow security would have highest price in the equilibrium of the market in which the same agents trade based on their beliefs. Figure \ref{fig:commute} depicts such relationship via a commutative diagram. This type of results open up the possibility (in principle) of implementing weighted majority voting with proven truth-tracking performance without needing to know (or estimate \cite{baharad10beyond}) jurors' competences. 

\paragraph{Paper outline}
Section \ref{sec:prelim} introduces the standard binary truth-tracking framework and presents our model of information markets. Section \ref{sec:Naive_Kelly} presents results on equilibrium prices in two of the three types of markets we consider (Naive and Kelly markets) and Section \ref{sec:Naive_Kelly_truth} proves `Figure \ref{fig:commute}-type' results for those markets. Section \ref{sec:perfect} then shows how such results could be lifted even to the case of majority voting where jurors are weighted perfectly according to their competence. 
Section  \ref{sec:examples} discusses two examples illustrating our framework and analysis. 
Section \ref{sec:conc} outlines future research directions. 
Auxiliary results are provided in Appendix \ref{sec:supp} and proofs of known results are provided for completeness in Appendix \ref{app:average}.



\section{Preliminaries} \label{sec:prelim}

\subsection{Collective truth-tracking}

\paragraph{Collective decisions.} We are concerned with a finite set of agents $N = \set{1, \ldots, n}$ who have to decide collectively on the correct state of the world $x \in \{ A, B \}$. A prior probability $P(x = A) = \pi = 0.5$ is given, that the correct state is $A$. Each agent $i$ observes a private independent signal $y_i \in \set{A,B}$ that has quality $q_i \in (0.5,1]$. Each $q_i$ represents the competence or accuracy of $i$, which is assumed to satisfy $q_i = P(y=A \mid x = A) = P(y=B \mid x = B)$. We call each vector $\q = (q_1, \ldots, q_n)$ of individual accuracies an {\em accuracy} or {\em competence profile} of the group. Having observed her private signal, each agent then forms a posterior belief $b_i = P(x = A \mid y = A)$ about state $x=A$ by Bayes rule. By Bayes' rule and the condition on the prior, we have that either $b_i = q_i > 0.5$ (the belief in $A$ equals $q_i$) or $b_i = 1 - q_i < 0.5$ (the belief in $A$ equals $1-q_i$). This gives us, for all $i \in N$:
\begin{equation}
b_i =  \mathbbm{1}(b_i > 0.5) \cdot (2 q_i-1) + (1-q_i), \label{eq:belief}
\end{equation}
where $\mathbbm{1}$ denotes the indicator function. Individual beliefs are  then collected in a {\em belief profile} $\b = (b_1, \ldots, b_n) \in [0,1]^n$. Given an accuracy profile $\q$, the set of possible belief profiles is denoted $\B_\q = \{ \b \in [0,1]^n \mid P(\b \mid \q)>0 \}$. Observe that the size of this set equals $2^n$: the number of all signal realizations. 

\medskip

Based on a profile $\b$ of individual beliefs, the group then takes a decision by mapping the profile to $A$ or to $B$. In this process, agents may have different weights, which are collected in a {\em weight profile} $\w = (w_1, \ldots, w_n) \in [0,1]^n$. We refer to $\triw = (1, \ldots, 1)$ as the {\em egalitarian weight profile} in which all agents have equal weight. Assuming a weight profile $\w$, we call an {\em aggregator} any function 
\begin{equation}
\A^{\w}: [0,1]^n \to 2^{\{1, 0 \}} \backslash \set{\emptyset},
\end{equation}
mapping belief profiles to alternatives, where $\set{1}$ denotes $A$; $\set{0}$ denotes $B$; and $\set{1,0}$ denotes a tie.

\paragraph{Types of aggregators.}
We will study two classes of mechanisms to implement aggregators. In the first class, agents cast binary ballots based on their beliefs and these ballots are submitted to a voting mechanism. The winning alternative is the outcome of the aggregation process. In the second class, agents' trade in special types of securities, based on their beliefs. The equilibrium price of this securities market is then used as a proxy for the group's belief in the probability of state $A$. In this case, it is the alternative favored by this collective belief to be the outcome of the aggregation process. 

Let us make the above notions more precise. First of all, a belief $b \in [0,1]$ is translated into binary opinions, or {\em votes}, for $A$ or $B$ via the binarization function $\bi{}: [0,1] \to 2^{\{1, 0 \}} \backslash \emptyset$ defined as follows:
\begin{align} \label{eq:binarization}
    \bi{b} & = 
    \begin{cases}
    \{ 1 \} & \mbox{if} \ b > 0.5, \\
    \{ 0 \} & \mbox{if} \ b < 0.5, \\
    \{ 0, 1 \} & \mbox{otherwise}.
    \end{cases}
\end{align}
That is, agents are assumed to vote in accordance to their posterior belief (this is sometimes referred to as sincere voting \cite{austen-smith96information}). A binarized belief profile $\bi{\b} = (\bi{b}_1, \ldots, \bi{b}_n)$ is therefore a binary vector and we will referred to such vectors also as {\em voting profiles} and denote them by $\v = (v_1, \ldots, v_n)$.\footnote{As individual beliefs cannot equal $0.5$, the reduction function always outputs a singleton $\{ 0 \}$ or $\{ 1 \}$ on individual beliefs. We will see, however, that this is not the case for collective beliefs.}

Given a weight profile $\w$, a (belief) merger is a function $F^\w: [0,1]^n \to [0,1]$ taking as input a belief profile and outputting a group belief. A choice function is a function $f^\w: \set{1,0}^n \to 2^{\set{0,1}} \backslash \emptyset$ taking as input a voting profile and outputting a possibly tied choice between $1$, i.e., $A$, and $0$, i.e., $B$.  We will study aggregators of the type $f^\w \circ \bi{ \ }$ (voting) and $\bi{ \ } \circ F^\w$ (trading), where $\circ$ denotes function composition. 
A voting mechanism is a choice function $f^w$ which, applied to a binarized belief profile $\bi{\b}$, yields a collective choice $f^\w(\bi{\b})$ (under the weight profile $\w$). A market mechanism is a belief aggregation function $F$ that, once applied to a belief profile $\b$, yields a collective belief $F^\w(\b)$ whose binarization $\bi{F^\w(\b)}$ yields a collective choice (under the weight profile $\w$). 

We are concerned with the truth-tracking performance of aggregators.
The accuracy 
of an aggregator $\A^\w$ under the accuracy profile $\q$, is the conditional probability that the outcome of the aggregator is $x$ if the state of the world is $x$.
What we outlined describes an epistemic social choice setting where the group is confronted with a maximum-likelihood estimation task in a dichotomous choice situation (see \cite{elkind16rationalizations}).


\subsection{Voting and market mechanisms}

We turn now to the description of the mechanisms we are concerned with.

\subsubsection{Voting mechanisms}

After observing their private signal, agents decide whether to vote for $A$ or $B$ according to Equation \eqref{eq:binarization}. A weighted majority rule is then applied to these votes to determine the group's choice:
\begin{align} \label{eq:maj}
    M^\w(\v) & = 
    \begin{cases}
    \{ 1 \} & \mbox{if} \ \sum_{i \in N} w_i v_i > \frac{\sum_{i \in N} w_i}{2}, \\
    \{ 0 \} & \mbox{if} \ \sum_{i \in N} w_i v_i < \frac{\sum_{i \in N} w_i}{2}, \\
    \{ 0, 1 \} & \mbox{otherwise}.
    \end{cases}
\end{align}
We will be working in particular with three variants of Equation \eqref{eq:maj} defined by three different weight profiles: the egalitarian weight profile $\triw$; the weight profile allocating to each agent $i$ a weight proportional to $q_i - 0.5$; the weight profile allocating to each agent $i$ a weight proportional to $\log \frac{q_i}{1-q_i}$. The first weight profile defines the {\em simple majority} rule. The second weight profile simulates decision-making according to the mean belief of the group. The latter weight profile can be inferred from Bayes theorem and induces the weighted majority rule which we refer to as {\em perfect majority}, and which has been proven to optimize the truth-tracking ability of the group.

\begin{theorem}[\cite{grofman83thirteen}] \label{th:grofman}
For any competence profile $\q$, 
the accuracy of $M^\w$ given $\q$
is maximal if $\w$ is such that $w_i \propto \ln \left( \frac{q_i}{1-q_i} \right)$ for all $i \in N$.
\end{theorem}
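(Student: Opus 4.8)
The plan is to read this as a Bayesian (maximum-likelihood) decision problem and to show that the log-odds weighted majority rule coincides \emph{exactly} with the maximum a posteriori (MAP) estimator of the state. Since the MAP rule maximizes the probability of a correct decision among \emph{all} decision rules, it is in particular optimal among the restricted family of weighted majority rules, which yields the claim. First I would record the reduction: because $\pi = 0.5$ and $q_i > 0.5$, Bayes' rule and the binarization \eqref{eq:binarization} make each agent vote in accordance with her private signal, i.e. $\bi{b}_i = \{1\}$ precisely when $y_i = A$. Thus a voting profile $\v$ is in bijection with a signal realization, and $Q(M^\w, \q)$ is just the probability, under the generative model, that $M^\w(\v)$ equals the true state $x$.

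The computational heart is then the likelihood ratio of a profile $\v$. By conditional independence of the signals given $x$,
\[
\frac{P(\v \mid x=A)}{P(\v \mid x=B)} = \prod_{i \in N} \frac{P(v_i \mid x=A)}{P(v_i \mid x=B)},
\]
where each factor equals $\frac{q_i}{1-q_i}$ if $v_i = 1$ and $\frac{1-q_i}{q_i}$ if $v_i = 0$. Taking logarithms gives
\[
\ln \frac{P(\v \mid x=A)}{P(\v \mid x=B)} = \sum_{i \in N} (2 v_i - 1)\, \ln\frac{q_i}{1-q_i},
\]
which is positive exactly when $\sum_i v_i \ln\frac{q_i}{1-q_i} > \tfrac12 \sum_i \ln\frac{q_i}{1-q_i}$. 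This is precisely the threshold defining $M^\w$ for $w_i = \ln\frac{q_i}{1-q_i}$, and these weights are positive since $q_i > 0.5$. Because the prior is uniform, MAP coincides with maximum likelihood, so the rule that chooses $A$ iff the above log-likelihood ratio is positive is exactly the Bayes-optimal rule; and since rescaling all weights by a common positive constant leaves the threshold rule unchanged, \emph{any} $\w$ with $w_i \propto \ln\frac{q_i}{1-q_i}$ induces this same MAP decision.

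I would close by arguing that the tie cases (where the log-likelihood ratio vanishes) contribute equal posterior mass to $A$ and $B$, so the tie-breaking convention in \eqref{eq:maj} is immaterial to the value of $Q$; and by the symmetry of the model (uniform prior and symmetric competences $q_i = P(y=A\mid x=A) = P(y=B\mid x=B)$) the conditional accuracies given $x=A$ and given $x=B$ coincide for this rule, so $Q(M^\w,\q)$ is well-defined and is maximized simultaneously for both states. I expect the likelihood computation itself to be entirely routine; the only genuinely load-bearing step is the conceptual one of invoking Bayes optimality of MAP and then observing that the unrestricted optimum happens to lie inside the weighted-majority family, which transfers optimality ``for free'' to that family. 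Handling the ties and the two conditional accuracies cleanly is the only place where a little care is needed.
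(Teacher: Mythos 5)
Your proposal is correct, but note that the paper itself offers no proof of this statement: it is imported by citation from Grofman, Owen, and Feld (1983), where the result (going back to Nitzan and Paroush) is established by precisely the argument you give, namely that under a uniform prior and independent signals the log-likelihood-ratio test is the Bayes-optimal rule and coincides with weighted majority under log-odds weights. Your reduction, the likelihood computation, and the handling of ties and of the two conditional accuracies all match that classical proof, so there is nothing to flag.
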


\subsubsection{Markets}

The market model we use is borrowed from \cite{kets14betting,beygelzimer2012learning}. Two symmetric Arrow securities are traded: securities of type $A$, which cost $p_A \in [0,1]$ and pay $1$ unit of currency if $x = A$, and $0$ otherwise; securities of type $B$, which cost $p_B \in [0,1]$ and pay $1$ unit if $x = B$ and $0$ otherwise.  After observing their private signal, agents decide what fraction of their endowment to invest in which securities. The endowment is fixed to $1$ for all agents.
 When the true state of the world is revealed, the market resolves and payouts based on the agents' investments are distributed. We refer to tuples $\s^A = \tuple{s^A_1, \ldots, s^A_n}$ (respectively, $\s^B = \tuple{s^B_1, \ldots, s^B_n}$) as investment profiles in $A$-securities (respectively, $B$-securities). We refer to a pair $\s = (\s^A,\s^B)$ as an {\em investment profile}.  We assume that agents {\em invest in at most one} of these securities, so if $s^A > 0$ then $s^B = 0$ and vice versa. In our setting, this assumption is without loss of generality (see Proposition \ref{prop:wlog} in the appendix\footnote{We are indebted to Marcus Pivato for bringing this issue to our attention.}). We call agents investing in $A$, $A$-{\em traders} and agents investing in $B$, $B$-{\em traders}. We proceed now to define the notions of price, utility and equilibrium.

\paragraph{Market mechanism}
When the market opens, all purchasing orders for each security are executed by the the market operator. The market operator sells all requested securities to agents when the market opens and pays the winning securities out immediately when the market resolves, that is, when either $A$ or $B$ turns out to be the case.
We further assume that the operator makes no profits and incurs no losses. So, for every $A$-security sold at price $p^A$, a $B$-security is sold at price $p^B = 1-p^A$ and vice versa. It follows that there are as many $A$-securities as $B$-securities and the price of the risk-less asset consisting of one of each security is $p_A + p_B = 1$. In this way the operator finances the payout of any bet by the pay-in of the opposite bet.


Under the above assumptions, the market clears\footnote{A market is said to clear when supply and demand match. In our model, supply and demand are implicit in the following way: demand for an $A$-security at price $p^A$ implies supply for a $B$-security at price $p^B = 1 - p^A$ and vice versa. The same applies to supply and demand for $B$ securities.} when the total amount of individual wealth invested in $A$-securities, divided by the price of $A$-securities (demand of $A$-securities) matches the amount of individual wealth invested in $B$-securities, divided by the price of $B$-securities (demand of $B$-securities):\footnote{It may be worth observing that by the above design we are effectively treating the operator as an extra trader in the market, who holds a risk-less asset consisting of $\frac{1}{p^A}\sum_{i \in N} s^A_i$ $A$-securities and $\frac{1}{1 - p^A} \sum_{i \in N} s^B_i$ $B$-securities. We are indebted to Marcus Pivato for this observation.}
\begin{equation} \label{eq:balance}
\frac{1}{p^A}\sum_{i \in N} s^A_i =  \frac{1}{1 - p^A} \sum_{i \in N} s^B_i. 
\end{equation}
It follows that, given an investment profile $\s$, solving Equation \eqref{eq:balance} for $p^A$, yields the clearing price $\frac{\sum_{i \in N} s^A_i}{\sum_{i \in N} s^A_i + \sum_{i \in N} s^B_i}$, denoted $p^A(\s)$. Note that the price is undefined if $p^A = 0$ or $p^A = 1$. We come back to this issue in Remark \ref{remark:null}.

When the market resolves, each agent receives a different payout depending on how much of each security she owns, how the market resolves, and how much of her endowment is not invested. The payout, that is, the amount of wealth obtained by $i$ with a given strategy $s^A_i$ investing in $A$ under a price $p^A$, is:
\begin{equation} \label{eq:clearing}
        z(p^A, s^A_i) = \left\{
        \begin{array}{ll}
            \frac{s^A_i}{p^A} & \quad A \text{ is correct},\\
            1-s^A_i & \quad \mbox{otherwise},
        \end{array}
    \right.
\end{equation}
where $\frac{s^A_i}{p^A}$ equals the amount of $A$-securities that $i$ has purchased.
The payout for an investment in $B$-securities is defined in the same manner. 

\begin{remark}
For simplicity, in what follows we will refer to the price of $A$-securities as $p$ instead of $p^A$ and to the price of $B$-securities as $1-p$ instead of $p^B$.
\end{remark}

\paragraph{Utility}
We study price $p$ by making assumptions on how much utility agents extract from their payout at that price. We consider two types of utility functions:
\begin{description}
   
\item[Naive] Given a price $p \in [0,1]$, the naive utility function of an $A$-trader $i$ is
$
u(p, s^A_i) = z(p,s^A_i)
$
Similarly, for a $B$-trader, it is 
$
u(1-p, s^B_i) = z(1-p,s^B_i).
$
The expected utility for investment in $A$-securities is then:
\begin{equation} \label{eq:nu}
U^A_i(p,s^A_i) = \mathbb{E}[u(p,s^A_i)] = b_i \left(\frac{s^A_i}{p} - s^A_i + 1 \right) + (1-b_i) (1-s^A_i). 
\end{equation}
The expected utility for investment in $B$-securities is, correspondingly, $b_i (1-s^B_i) + (1-b_i) \left(\frac{s^B_i}{1-p} - s^B_i + 1\right)$.
We will refer to markets under a naive utility assumption as {\em Naive markets}.

\item[Kelly] Given a price $p \in [0,1]$, the Kelly \cite{kelly56new} utility function of an $A$-trader $i$ is $u(p,s^A_i) = \ln(z(p,s^A_i))$, and mutatis mutandis for $B$-traders. The expected Kelly utility for an $A$-trader is therefore:
\begin{equation}
U^A_i(p,s^A_i) = \mathbb{E}[u(p,s^A_i)] = b_i \ln \left(\frac{s^A_i}{p} - s^A_i + 1 \right) + (1-b_i) \ln (1-s^A_i). \label{eq:kelly}
\end{equation} 
Correspondingly, the expected utility of investment $s^B_i$ for a $B$-traders is $b_i \ln(1-s^B_i) + (1-b_i) \ln \left( \frac{s^B_i}{1-p} - s^B_i + 1 \right)$.
We will refer to markets under such logarithmic utility assumption as {\em Kelly markets}. Investing with a logarithmic utility function is known as Kelly betting and is known to maximize bettor's wealth over time \cite{kelly56new}. Information market traders with Kelly utilities have been studied, for instance, in \cite{beygelzimer2012learning}.
\end{description}

\paragraph{Equilibria} For each of the above models of utility we will work with the notion of equilibrium known as competitive equilibrium \cite{pennock_thesis}. This equilibrium assumes that agents optimize the choice of their investment strategy $s_i$ under the balancing assumption of Equation \eqref{eq:balance}, while not considering the effect of their choice on the price (they behave as `price takers'). 

\begin{definition}[Competitive equilibrium]
Given a belief profile $\b$, an investment profile $\s$ is in competitive equlibrium for price $p^\star$ if and only if:
\begin{enumerate}
\item Equation \eqref{eq:balance} holds, that is, $p^\star = p(\s)$,
\item for all $i\in N$, if $i$ is a $t$-trader in $\s$, then $s^t_i \in \argmax_{x \in [0,1]} U_i^t(p^t,x)$, for $t \in \set{A,B}$.
\end{enumerate}
\end{definition}
So, when the investment profile $\s$ is in equilibrium with respect to the $A$-securities $p^\star$, no agent would like to purchase more securities of any type given their beliefs. If $\s$ is in equilibrium for price $p(\s)$, then we say that $\s$ is an {\em equilibrium}. If equilibria always exist, and are such for one same price, then the equilibrium price can be interpreted as the market's belief that the state of the world is $A$, given the agents' underlying beliefs $\b$. We can therefore view a market as a belief merger $F^\w: [0,1]^n \to [0,1]$, mapping beliefs to the equilibrium price.

\begin{remark}[Null price] \label{remark:null}
Under equation \eqref{eq:balance} a price $p = 0$ (respectively, $p=1$) implies that there are no $A$-traders (respectively, no $B$-traders).
In such cases Equations \eqref{eq:clearing}, \eqref{eq:nu} (Naive utility) and \eqref{eq:kelly} (Kelly utility) would be formally undefined. Such situations, however, cannot occur in equilibrium because as $p$ approaches $0$ (respectively, $1$), the utility for $s^A_i > 0$ (respectively, $s^B_i>0$) approaches $\infty$ under both utility models. No investment profile can therefore be in equilibrium with respect to prices $p=0$ or $p=1$.
\end{remark}


\section{Equilibrium price in Naive and Kelly markets} \label{sec:Naive_Kelly}

In order to see markets as belief aggregators we need to show that the above market types always admit equilibria and, ideally, that equilibrium prices are unique, thereby making the aggregator resolute. We do so in this section.

\subsection{
Equilibrium $p$ in Naive markets is the ($1-p$)-quantile belief
}


Let us start by observing that, under naive utility, agents maximize their utility by investing all their wealth, unless their belief equals the price, in which case any level of investment would yield the same utility to them in expectation. 

\begin{lemma} \label{lemma:naive}
In Naive markets, for any competence profile $\q$, belief profile $\b \in \B_\q$, and price $p \in [0,1]$ we have that, for any $i \in N$:
 \begin{align*}
 \argmax_{x \in [0,1]} U_i^A(p,x) & =     
 \begin{cases}
    \{ 1 \} & \mbox{if} \ p < b_i, \\
    \{ 0 \} & \mbox{if} \ p > b_i, \\
    [0,1] & \mbox{otherwise}, 
    \end{cases}  \\
   \argmax_{x \in [0,1]} U_i^B(p,x) & =     
  \begin{cases}
    \{ 1 \} & \mbox{if} \ (1-p) < (1-b_i), \\
    \{ 0 \} & \mbox{if} \ (1-p) > (1-b_i), \\
    [0,1] & \mbox{otherwise}. 
    \end{cases} 
\end{align*}
\end{lemma}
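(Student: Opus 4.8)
The plan is to exploit the fact that, under naive utility, each agent's expected-utility function is \emph{affine} in the single quantity she controls, so that the optimization over $[0,1]$ is decided entirely by the sign of one slope coefficient. Concretely, for an $A$-trader I would start from Equation \eqref{eq:nu} and gather the terms by powers of $s^A_i$. Since the payout \eqref{eq:clearing} is linear in the amount invested, no term of degree greater than one survives, and the utility collapses to the affine form $U^A_i(p, x) = \left( \frac{b_i}{p} - 1 \right) x + 1$. The slope $\frac{b_i}{p} - 1 = \frac{b_i - p}{p}$ then carries all the relevant information: since $p > 0$ on the range of interest (the boundary prices $p \in \{0,1\}$ make \eqref{eq:nu} formally undefined and are in any case excluded from equilibria by Remark \ref{remark:null}), the sign of this slope equals the sign of $b_i - p$.

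Next I would invoke the elementary fact that an affine function on the compact interval $[0,1]$ is maximized at the right endpoint when its slope is strictly positive, at the left endpoint when the slope is strictly negative, and on the \emph{whole} interval when the slope vanishes. Reading off the three cases $p < b_i$, $p > b_i$, and $p = b_i$ then yields exactly the branches $\{1\}$, $\{0\}$, and $[0,1]$ asserted for $\argmax_{x \in [0,1]} U_i^A(p,x)$. The $B$-trader case is entirely symmetric: collecting the stated $B$-utility by powers of $s^B_i$ gives slope $\frac{1-b_i}{1-p} - 1 = \frac{(1-b_i) - (1-p)}{1-p}$, whose sign (using $1 - p > 0$) matches that of $(1-b_i) - (1-p)$, and the same endpoint argument produces the three claimed cases.

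I do not anticipate a genuine obstacle, since the whole argument amounts to a one-variable linear-programming observation. The only points deserving care are (i) confirming that the naive utilities are truly affine, and not merely concave, in the strategy variable: this linearity is precisely what licenses reading the optimum off the slope alone and, crucially, what forces the entire interval $[0,1]$ to be optimal in the tie case $p = b_i$, rather than some isolated interior point; and (ii) the boundary prices $p \in \{0,1\}$, which must be set aside because the utility expressions are undefined there, but which are harmless as they never arise in equilibrium by Remark \ref{remark:null}.
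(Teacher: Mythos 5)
Your proof is correct and takes essentially the same approach as the paper's: the paper likewise exploits linearity of the naive utility in the investment variable, rewriting Equation \eqref{eq:nu} so that the endpoint values $U^A_i(p,0)=1$ and $U^A_i(p,1)=\frac{b_i}{p}$ decide the maximizer --- your slope $\frac{b_i}{p}-1$ is exactly the difference of those endpoint values, and both arguments handle the tie case $p=b_i$ by noting the utility is then constant. Your explicit remark about the boundary prices $p\in\{0,1\}$ is a small point of extra care that the paper leaves implicit (deferring it to Remark \ref{remark:null}).
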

\begin{proof} 
We reason for $A$. The argument for $B$ is symmetric.
Observe first of all that Equation \eqref{eq:nu} can be rewritten as 
$
    U^A_i(p,s^A_i) 
     = \frac{b_i}{p} (s^A_i (1 - p) + p) + (1-b_i) (1-s^A_i).
 $ 
So, the expected utility for strategy $s^A_i = 1$ is $\frac{b_i}{p}$ and for $s^A_i = 0$ is $1$.  If $\frac{b_i}{p}> 1$, $U^A_i(p,s^A_i) \in [1, \frac{b_i}{p}]$ and so  $s^A_1 = 1$ maximizes Equation \eqref{eq:nu}. By our assumptions, it follows that $s_i^B = 0$.
If $\frac{b_i}{p} < 1$ instead $U^A_i(p,s^A_i) \in [\frac{b_i}{p},1]$ and $s^A_1 = 0$ maximizes Equation \eqref{eq:nu}. The agent then takes the opposite side of the bet and maximizes $U^B_i(p,s^B_i)$ by setting $s_i^B = 1$. Finally, if $\frac{b_i}{p} = 1$, all investment strategies yield expected utility $1$. \qed
\end{proof}
The above result tells us that if $\s$ is in competitive equilbrium with respect to price $p(\s)$ in a Naive market, then for each agent $i$: $s^A_i = 1$ if $b_i > p(\s)$, $s^A_i = 0$ if $b_i < p(\s)$, and $s_i \in [0,1]$ if $b_i = p(\s)$. The same holds, symmetrically, for $s_i^B$.  

\medskip

Let us denote by $\linCE(\b)$ the set of investment profiles $\s$ in competitive equilibrium (under naive utilities) given beliefs $\b$. We show now that such equilibria always exist and are unique.
\begin{lemma} \label{lemma:least}
In Naive markets, for any competence profile $\q$ and belief profile $\b \in \B_\q$, $|NC(\b)| \geq 1$.
\end{lemma}
\begin{algorithm}[t] \label{algo:eq}
\SetKwInOut{Input}{input}
\SetKwInOut{Output}{output}
\caption{Competitive equilibria in Naive markets}

\Input{A belief profile $\b = (b_1, \ldots, b_n)$ ordered from highest to lowest beliefs}
\Output{An investment profile $\s = (\s^A,\s^B)$}
\BlankLine
$\s^A \gets (0, \ldots, 0)$ \tcc*[r]{We start by assuming no agent invests in $A$}
\For{$1 \leq i < n$}
{
\If{
$b_i \geq \frac{i}{n} \geq b_{i+1}$
}
{
$\s^A \gets (\underbrace{1, \ldots, 1}_{i~\mathit{times}}, 0, \ldots, 0)$ and
$\s^B \gets (\underbrace{0, \ldots, 0}_{i~\mathit{times}}, 1, \ldots, 1)$ \;
\Return{$(\s^A,\s^B)$} and exit \tcc*[r]{profile with price $\frac{i}{n}$} 
}
}
\For{$1 \leq i < n$}
{
\If{
    $\frac{i-1}{n} < b_i < \frac{i}{n}$
    }
    {
    $x \gets$ solve $\frac{1}{b_i}((i-1) + x) = \frac{1}{1-b_i}(n-i)$ \tcc*[r]{partial $A$ investment} 
    \eIf{
    $x \geq 0$
    }
   {
   $s^A_i \gets x$ \;
    $\s^A \gets (\underbrace{1, \ldots, 1}_{i-1~\mathit{times}}, s^A_i, 0, \ldots, 0)$ and
    $\s^B \gets (\underbrace{0, \ldots, 0}_{i-1~\mathit{times}}, 0, 1, \ldots, 1)$ \;
    \Return{$(\s^A,\s^B)$} and exit
    }
    {
    $x \gets$ solve $\frac{1}{b_i}(i-1) = \frac{1}{1-b_i}((n-i)+x)$ \tcc*[r]{partial $B$ investment}
    $s^B_i \gets x$ \;
    $\s^B \gets (\underbrace{0, \ldots, 0}_{i-1~\mathit{times}}, s^B_i, 1, \ldots, 1)$ and
    $\s^A \gets (\underbrace{1, \ldots, 1}_{i-1~\mathit{times}}, 0, 0, \ldots, 0)$ \;
    \Return{$(\s^A,\s^B)$} and exit \tcc*[r]{profile with price $b_i$} 
    }
}
}
\end{algorithm}
\begin{proof}
We prove the claim by construction via Algorithm \ref{algo:eq}, by showing that the algorithm outputs an investment profile which is a competitive equilibrium. 

The algorithm consists of two routines: lines 1-7, and lines 8-21. We first show that, via these two routines, the algorithm always yields an output: if the first routine does not return an output, the second one does. The two routines compare entries in two vectors:  the $n$-long vector of beliefs $(b_1, \ldots, b_n)$, assumed to be ordered by decreasing values (thus, stronger beliefs first); the $n+1$-long vector $(0,\frac{1}{n}, \frac{2}{n}, \ldots, \frac{n}{n})$, ordered therefore by increasing values. The two vectors define two functions from $\{0, \ldots, n \}$ to $[0,1]$ (we postulate $b_0 = 1$). Because the first function is non-increasing, and the second one is increasing and its image contains both $0$ and $1$, there exists $i \in \{0, \ldots n \}$ such that the two segments $[b_{i+1},b_i]$ and $[\frac{i}{n},\frac{i+1}{n}]$ intersect. There are two cases: $\frac{i}{n}$ lies in $[b_{i+1},b_i]$, in which case the condition of the first routine applies; or $b_{i+1}$ lies in $[\frac{i}{n},\frac{i+1}{n}]$, in which case the condition of the second loop applies. 

It remains to be shown that the outputs of the two routines are equilibria. The output of the first routine is an investment profile $\s = (\s^A,\s^B)$ where $i$ agents fully invest in $A$ and the remaining agents fully invest in $B$, yielding a price $p(\s) = \frac{i}{n} \in [b_i, b_{i+1}]$. By Lemma \ref{lemma:naive} such a profile is an equilibrium. The output of the second routine is an investment profile $\s$ where $i-1$ agents fully invest in $A$, $n-i$ agents fully invest in $B$ and agent $i$, whose belief equals the price, invests partially in either $A$ or $B$ in order to meet the clearing Equation \eqref{eq:balance}. By Lemma \ref{lemma:naive} we conclude that the profile is in equilibrium for $b_i$. \qed
\end{proof}
Observe that the price constructed by Algorithm \ref{algo:eq} lies in the $[b_i, b_{i+1}]$ interval.

\begin{lemma} \label{lemma:most}
Under Naive utilities, for any competence profile $\q$ and belief profile $\b \in \B_\q$, $|NC(\b)| \leq 1$.
\end{lemma}
\begin{proof}
Assume towards a contradiction there exist $\s \neq \t \in NC(\b)$. It follows that $p(\s) \neq p(\t)$. Assume w.l.o.g. that $p(\s) < p(\t)$. By Equation \ref{eq:balance} and the definition of competitive equilibrium, it follows that $\sum_{i \in N} s^A_i \leq  \sum_{i \in N} t^A_i$ (larger $A$-investment in $\t$). By Lemma \ref{lemma:naive} it follows that there are more agents $i$ such that $b_i > p(\t)$ rather than $b_i > p(\s)$, and therefore that $p(\t) < p(\s)$. A contradiction follows. \qed
\end{proof}

We can thus conclude that in Naive markets there exists exactly one competitive equilibrium and, therefore, only one equilibrium price.
\begin{theorem}
In Naive markets, for any competence profile $\q$ and belief profile $\b \in \B_\q$, $NC(\b)$ is a singleton.
\end{theorem}
\begin{proof}
The result follows directly from Lemmas \ref{lemma:least} and \ref{lemma:most}. \qed
\end{proof}
We will refer the equilibrium profile as $\s_{NC}(\b)$ and to its equilibrium price as $p_{NC}(\b)$. An interesting consequence of the above results is that such equilibrium price behaves like a quintile of $\b$, splitting the belief profile into segments roughly proportional to the price.
\begin{corollary} \label{cor:quantile}
In Naive markets, for any competence profile $\q$ and belief profile $\b \in \B_\q$, there are $n \cdot p(\s)$ agents $i$ such that $b_i \geq p_{NC}(\b)$ and there are $n \cdot  (1-p(\s))$ agents $i$ such that $b_i \leq p_{NC}(\b)$.
\end{corollary}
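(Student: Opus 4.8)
The plan is to read the two counts directly off the unique equilibrium profile, whose structure is pinned down by Lemma \ref{lemma:naive} and the clearing condition \eqref{eq:balance}. Write $p := p_{NC}(\b)$ and $\s := \s_{NC}(\b)$, which are well defined because $NC(\b)$ is a singleton. First I would partition $N$ according to how $b_i$ compares with $p$: set $k^+ = \#\{i : b_i > p\}$, $k^- = \#\{i : b_i < p\}$ and $k^0 = \#\{i : b_i = p\}$, so that $k^+ + k^- + k^0 = n$. By Lemma \ref{lemma:naive}, every agent of the first block invests fully in $A$ ($s^A_i = 1$), every agent of the second block invests fully in $B$ ($s^B_i = 1$), and only the $k^0$ indifferent agents may split fractionally. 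Hence $\sum_i s^A_i = k^+ + a$ and $\sum_i s^B_i = k^- + c$ for some $a,c \geq 0$ with $a + c \leq k^0$, since each tied agent invests at most its unit endowment and in at most one security.

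Next I would feed these totals into the clearing price. Equation \eqref{eq:balance} rewrites as $p = \frac{\sum_i s^A_i}{\sum_i s^A_i + \sum_i s^B_i} = \frac{k^+ + a}{(k^+ + a) + (k^- + c)}$, the denominator being positive since $p \notin \{0,1\}$ by Remark \ref{remark:null}. The targets are then $np \leq k^+ + k^0 = \#\{i : b_i \geq p\}$ and $n(1-p) \leq k^- + k^0 = \#\{i : b_i \leq p\}$. For the first I clear denominators and substitute $n = k^+ + k^- + k^0$: the difference $(k^+ + k^0)\big((k^+ + a) + (k^- + c)\big) - n(k^+ + a)$ simplifies to $k^+ c + k^0 c + k^-(k^0 - a)$, which is $\geq 0$ because $c \geq 0$ and $a \leq k^0$. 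The second inequality is identical after swapping $A \leftrightarrow B$, i.e. $p \leftrightarrow 1-p$, $k^+ \leftrightarrow k^-$ and $a \leftrightarrow c$.

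The only genuinely delicate point is the block of agents whose belief equals the price. When $k^0 = 0$ all endowments are invested, $\sum_i s^A_i + \sum_i s^B_i = n$, and one recovers the clean equalities $\#\{i : b_i \geq p\} = np$ and $\#\{i : b_i \leq p\} = n(1-p)$, so the statement holds on the nose. When some belief equals $p$ — which, as the second routine of Algorithm \ref{algo:eq} shows, genuinely occurs — part of the endowment stays idle, $np$ need not be an integer, and the equality must be read as the quantile bounds above; this is precisely the regime the computation in the previous paragraph is built to absorb, and it is the step I expect to require the most care. I would close by noting the two bounds are complementary: since $\#\{i : b_i \geq p\} + \#\{i : b_i \leq p\} = n + k^0$ while their lower bounds sum to exactly $n$, the price $p$ indeed cuts $\b$ into an upper part of size $\geq np$ and a lower part of size $\geq n(1-p)$, so that $p$ behaves as the claimed quantile.
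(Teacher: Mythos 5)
Your proof is correct, but it follows a genuinely different route from the paper's. The paper gives no standalone proof of Corollary \ref{cor:quantile}: it reads the counts off the explicit equilibrium constructed by Algorithm \ref{algo:eq} (either $i$ agents fully in $A$ at price $\frac{i}{n}$, or $i-1$ full $A$-investors plus one partially invested agent at price $b_i$), with the uniqueness theorem guaranteeing that this constructed profile \emph{is} the equilibrium. You instead work directly from the equilibrium conditions themselves --- Lemma \ref{lemma:naive} to pin down the block structure $(k^+, k^-, k^0)$ and Equation \eqref{eq:balance} to express $p$ in terms of the invested totals --- and verify the quantile bounds by clearing denominators. What your approach buys: it never touches the algorithm or the sorting of beliefs, it treats the tie case and the no-tie case in one uniform computation, and it makes precise what the corollary's equalities actually mean. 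Indeed, you correctly observe that the statement as literally written (exact counts $n \cdot p(\s)$ and $n \cdot (1-p(\s))$) can only hold when no belief equals the price; in the tie case $np$ is not even an integer, which is why the paper hedges with ``roughly proportional.'' Your inequality reading, with the complementarity remark that the two lower bounds sum to exactly $n$ while the two counts sum to $n + k^0$, is the defensible formalization, and arguably a cleaner justification than the paper's implicit one. What the paper's route buys in exchange is concreteness: in the generic case it exhibits the exact counts rather than bounds, at the cost of a case analysis inherited from the algorithm.
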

The equilibrium price $p_{NC}(\b)$ corresponds to the $(1-p_{NC}(\b))$-quantile of $\b$.\footnote{A similar observation, but for a continuum of players ($N = [0,1]$) and for subjective beliefs, is made in \cite{manski06interpreting}.}


\subsection{The average belief is the equilibrium price in Kelly markets} \label{sec:average}

The two following lemmas are known results from the betting \cite{kelly56new} and the information markets literature \cite{beygelzimer2012learning}, which we restate here for completeness. Proofs are also provided in Appendix \ref{app:average}. 

\begin{lemma}[\cite{kelly56new}] \label{lemma:kelly}
In Kelly markets, for any $b_i \in [0,1]$ and $p \in [0,1]$:
 \begin{align*}
\argmax_{x \in [0,1]} U_i^A(p,x) & = 
    \begin{cases}
    \frac{b_i - p}{1-p} & \mbox{if} \ p < b_i, \\
    0 & \mbox{otherwise}, 
    \end{cases} \\
\argmax_{x \in [0,1]} U_i^B(p,x) & = 
    \begin{cases}
    \frac{p - b_i}{p} & \mbox{if} \ (1-p) < (1-b_i), \\
    0 & \mbox{otherwise}. 
    \end{cases}
\end{align*}
\end{lemma}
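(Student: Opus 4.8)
The plan is to treat the $A$-side in detail and recover the $B$-side from the symmetry $A \leftrightarrow B$, $p \leftrightarrow 1-p$, $b_i \leftrightarrow 1-b_i$ that is built into the two Kelly expressions. I would concentrate on the substantive regime $b_i, p \in (0,1)$ (beliefs lie strictly in $(0,1)$ and, by Remark \ref{remark:null}, the equilibrium price is never null), leaving the degenerate boundary values of $b_i$ and $p$ to a short direct inspection. Writing the objective of Equation \eqref{eq:kelly} as the single-variable function
$$g(x) = b_i \ln\!\left(x\tfrac{1-p}{p}+1\right) + (1-b_i)\ln(1-x),$$
I first note that $g(x)\to -\infty$ as $x\to 1^{-}$, so $x=1$ is never a maximiser and the effective domain is $[0,1)$; the search thus reduces to locating an interior critical point or falling back to the corner $x=0$.

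Next I would establish strict concavity. Differentiating twice gives
$$g''(x) = -\frac{b_i\left(\frac{1-p}{p}\right)^{2}}{\left(x\frac{1-p}{p}+1\right)^{2}} - \frac{1-b_i}{(1-x)^{2}},$$
which is strictly negative on $[0,1)$ since $b_i\in(0,1)$ and $\frac{1-p}{p}>0$. Consequently $g$ has at most one stationary point; if that point lies in the domain it is the unique global maximiser, and otherwise the maximiser is a boundary point.

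I would then solve the first-order condition. Setting $\frac{b_i(1-p)/p}{x(1-p)/p+1} = \frac{1-b_i}{1-x}$ and clearing denominators, the terms quadratic in $x$ cancel and one is left with a linear equation whose root is $x^\star = \frac{b_i-p}{1-p}$. The remaining work is boundary bookkeeping: because $p<1$, one has $x^\star\ge 0 \iff p\le b_i$ while $x^\star\le 1$ always holds, so $x^\star\in[0,1)$ precisely when $p\le b_i$. When $p<b_i$ the stationary point is interior and, by concavity, equals the claimed maximiser $\frac{b_i-p}{1-p}$. When $p\ge b_i$ one checks $g'(0)=b_i\tfrac{1-p}{p}-(1-b_i)\le 0$ (this inequality is equivalent to $b_i\le p$), so concavity makes $g$ non-increasing on $[0,1)$ and the maximiser is the corner $x=0$, matching the ``otherwise'' branch.

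I do not anticipate a real obstacle: this is a standard strictly-concave maximisation, and the only points demanding care are confirming strict concavity on the entire domain (so that the first-order condition is both necessary and sufficient) and placing the threshold $p=b_i$ correctly, where $x^\star=0$ sits in the ``otherwise'' branch. The $B$-side formula $\frac{p-b_i}{p}$, valid exactly when $(1-p)<(1-b_i)$, then follows by rerunning the same argument on the $B$-utility under the stated symmetry.
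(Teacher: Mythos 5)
Your proof is correct and takes essentially the same route as the paper's: both compute the derivative of the Kelly utility, solve the first-order condition to obtain $x^\star = \frac{b_i-p}{1-p}$, certify it via the everywhere-negative second derivative, and dispatch the $B$-side by symmetry. Your handling of the corner regime $p \ge b_i$ (checking $g'(0) \le 0$ and invoking concavity) is a slightly more explicit version of the paper's remark that the critical value becomes non-positive and the trader switches sides, but it is not a different argument.
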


So, a strategy profile $\s$ is in Kelly competitive equilibrium with respect to price $p(\s)$ whenever Equation \eqref{eq:balance} is satisfied together with the `Kelly conditions' of Lemma \ref{lemma:kelly}. Unlike in the case of Naive markets it is easy to see that such equilibrium is unique. So, for a given belief profile $\b$, let us denote by $\s_\logCE(\b)$ such competitive equilibrium and by $p_\logCE(\b)$ the price at such equilibrium. 

\begin{lemma}[\cite{beygelzimer2012learning}] \label{lemma:average}
For any 
$\q$ and 
$\b \in \B_\q$,
$
p_\logCE(\b) = \frac{1}{|N|}\sum_{i \in N} b_i.
$
\end{lemma}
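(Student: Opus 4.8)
The plan is to substitute the optimal Kelly strategies from Lemma~\ref{lemma:kelly} into the market-clearing condition \eqref{eq:balance} and show that it collapses to the statement that the average belief equals the price. First I would fix a candidate equilibrium price $p$ and use Lemma~\ref{lemma:kelly} to split the agents into three groups according to the sign of $b_i - p$: the $A$-traders, with $b_i > p$ and $s^A_i = \frac{b_i - p}{1-p}$; the $B$-traders, with $b_i < p$ and $s^B_i = \frac{p - b_i}{p}$; and the agents with $b_i = p$, who invest nothing on either side. This partition is exactly what the equilibrium conditions prescribe, since the Kelly objective is strictly concave and so has a unique maximiser for each agent given $p$.

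Next I would compute the two aggregate investments that appear in \eqref{eq:balance}. Summing the optimal strategies gives
\begin{align*}
\sum_{i \in N} s^A_i = \frac{1}{1-p}\sum_{i:\, b_i > p}(b_i - p), \qquad \sum_{i \in N} s^B_i = \frac{1}{p}\sum_{i:\, b_i < p}(p - b_i).
\end{align*}
Dividing the first sum by $p$ and the second by $1-p$, as \eqref{eq:balance} requires, produces the common factor $\frac{1}{p(1-p)}$ on both sides. Cancelling it, the clearing condition reduces to $\sum_{i:\, b_i > p}(b_i - p) = \sum_{i:\, b_i < p}(p - b_i)$.

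The key step is then purely bookkeeping: moving the right-hand side over and noting that the agents with $b_i = p$ contribute a zero term, the two partial sums reassemble into a single sum over all of $N$, so that the condition becomes $\sum_{i \in N}(b_i - p) = 0$. Solving for $p$ yields $p = \frac{1}{|N|}\sum_{i\in N} b_i$, which is the claimed identity for $p_\logCE(\b)$. Since each $b_i \in (0,1)$, this average lies in $(0,1)$, so the price is well defined and consistent with Remark~\ref{remark:null}; and because the clearing condition turns out to be \emph{equivalent} to $p = \frac{1}{|N|}\sum_{i\in N} b_i$, the same computation re-derives uniqueness of the equilibrium price.

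I do not expect a genuine obstacle here: the whole argument is driven by the fortunate cancellation of $\frac{1}{p(1-p)}$, which is special to the Kelly strategies and is what makes the clearing condition linear in the beliefs. The only point requiring a little care is the degenerate case in which all beliefs coincide, where every agent invests $0$ and the clearing price $p(\s)$ is formally $\tfrac{0}{0}$; I would dispose of this by taking the equilibrium price to be the common belief, which is again the average, so the statement still holds.
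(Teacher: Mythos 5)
Your proposal is correct and follows essentially the same route as the paper's proof: substitute the Kelly-optimal strategies from Lemma~\ref{lemma:kelly} into the clearing condition \eqref{eq:balance}, cancel the common factor $\frac{1}{p(1-p)}$, and reassemble the two partial sums (agents with $b_i = p$ contributing zero) into $\sum_{i \in N}(b_i - p) = 0$, hence $p = \frac{1}{n}\sum_{i \in N} b_i$. Your two additions---noting that the computation is an equivalence and so re-derives uniqueness, and handling the degenerate all-equal-beliefs case where $p(\s)$ is formally $\tfrac{0}{0}$---are sound refinements of points the paper leaves implicit, but they do not change the argument.
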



\section{Truth-Tracking via Equilibrium Prices} \label{sec:Naive_Kelly_truth}

In this section we show how competitive equilibria in Naive and Kelly markets correspond to election by simple majority and, respectively, by a majority in which agents carry weight proportional to their competence minus $0.5$.

\subsection{Simple majority and Naive markets}

\begin{wrapfigure}{r}{4cm}
\begin{center}
\begin{tikzpicture}[node distance=2cm, auto]
    \node (1) {$\b$};
    \node (2) [below of=1] {$\bi{\b}$};
    \node (3) [right of=1] {$p_\linCE(\b)$};
    \node (4) [below of=3] {$M^\triw(\bi{\b})$};
    \draw[->] (1) to node {$\bi{}$} (2);
    \draw[->] (3) to node {$\bi{}$} (4);
    \draw[->] (1) to node {$\linCE$} (3);
    \draw[->] (2) to node {$M^\triw$} (4);
\end{tikzpicture}
\end{center}
\vspace{-0.4cm}
\caption{Simple majority and Naive markets commute.}
\vspace{-0.5cm}
\end{wrapfigure}

The following result shows that simple majority is implemented in competitive equilibrium by a Naive market: for any belief profile $\b$ induced by independent individual competences in $(0.5, 1]$, the diagram on the right commutes.
That is, the outcome of simple majority always consists of the security that the $(1-p)$-quantile belief (where $p$ is the equilibrium price) would invest in equilibrium when the market is naive. So we can treat $\linCE$ as a belief aggregator $[0,1]^n \to [0,1]$ mapping belief profiles to prices induced by competitive equilibria.

\begin{theorem}\label{th:simple_maj}
In Kelly markets, for any competence profile $\q$ and $\b \in \B_\q$:
\[
M^\triw(\bi{\b}) = \bi{p_{NC}(\b)}.
\]
\end{theorem}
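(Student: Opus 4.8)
The plan is to compare the two sides according to the sign of $p_{NC}(\b) - 0.5$ and match it against the majority margin. First I would unfold both sides of the claimed identity. By Equation \eqref{eq:maj} with $\w = \triw$ together with the binarization \eqref{eq:binarization}, the value $M^\triw(\bi{\b})$ depends only on the vote count $m := |\{i \in N : b_i > 0.5\}|$ (recall no individual belief equals $0.5$), returning $\{1\}$, $\{0\}$ or $\{0,1\}$ according as $m > n/2$, $m < n/2$ or $m = n/2$; while $\bi{p_{NC}(\b)}$ is $\{1\}$, $\{0\}$ or $\{0,1\}$ according as $p_{NC}(\b)$ is greater than, less than, or equal to $0.5$. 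So it suffices to show that $m$ compares to $n/2$ exactly as $p_{NC}(\b)$ compares to $0.5$.

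The main tool is the quantile behaviour of the Naive equilibrium price from Corollary \ref{cor:quantile}, which I would first restate in the precise two-sided form I actually need: writing $p = p_{NC}(\b)$, the construction in Algorithm \ref{algo:eq} (whether it terminates in the first or the second routine) yields $|\{i : b_i > p\}| \le n p \le |\{i : b_i \ge p\}|$, i.e. $p$ is sandwiched between the strict and the weak ``above $p$'' counts divided by $n$. Both count functions $t \mapsto |\{i : b_i > t\}|$ and $t \mapsto |\{i : b_i \ge t\}|$ are non-increasing in the threshold $t$, and since no belief equals $0.5$ we have $|\{i : b_i \ge 0.5\}| = |\{i : b_i > 0.5\}| = m$.

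With this in hand the three cases are short contradiction arguments. If $m > n/2$ but $p \le 0.5$, then monotonicity gives $|\{i : b_i > p\}| \ge |\{i : b_i > 0.5\}| = m > n/2 \ge np$, contradicting the left half of the sandwich; hence $p > 0.5$. Symmetrically, if $m < n/2$ but $p \ge 0.5$, then $|\{i : b_i \ge p\}| \le |\{i : b_i \ge 0.5\}| = m < n/2 \le np$ contradicts the right half of the sandwich, so $p < 0.5$. For the tie $m = n/2$ (necessarily $n$ even), assuming $p > 0.5$ forces $|\{i : b_i \ge p\}| \le m = n/2 < np$, and assuming $p < 0.5$ forces $|\{i : b_i > p\}| \ge m = n/2 > np$; both contradict the sandwich, so $p = 0.5$. (Alternatively, in the tie case one can check directly that the profile in which $n/2$ agents invest fully in each security clears at $p = 0.5$ by Equation \eqref{eq:balance}, and then invoke uniqueness of the Naive equilibrium price.)

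I expect the only delicate point to be phrasing the quantile fact with the correct strict-versus-weak inequalities. Taken literally, the count ``$n \cdot p(\s)$'' in Corollary \ref{cor:quantile} need not be an integer in the partial-investment branch of Algorithm \ref{algo:eq}, so rather than quoting an exact count I would re-extract the bound $|\{i : b_i > p\}| \le np \le |\{i : b_i \ge p\}|$ directly from the two routines of the algorithm. Everything else reduces to monotonicity of the threshold counts and the fact that beliefs never land exactly on $0.5$.
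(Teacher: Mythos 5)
Your proof is correct and takes essentially the same approach as the paper's: both reduce the claim to the quantile property of the Naive equilibrium price (Corollary \ref{cor:quantile}), matching the count of beliefs above the price against the majority count of beliefs above $0.5$. Yours is in fact more careful than the paper's one-sentence argument, since it treats the tie case $m = n/2$ explicitly and correctly repairs the non-integrality of the count $n \cdot p(\s)$ in the partial-investment branch by re-extracting the sandwich bound $|\{i : b_i > p\}| \le np \le |\{i : b_i \ge p\}|$ directly from the two routines of Algorithm \ref{algo:eq}.
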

\begin{proof}
The claim follows from the observation that, by Corollary \ref{cor:quantile}, $p_{NC}(\b) > 0.5$ if and only if there exists a majority of traders whose beliefs are higher than the price. From which we conclude that $\hat{\b}$ contains a majority of votes for $A$. \qed
\end{proof}

\begin{remark}
Note that, by Theorem \ref{th:simple_maj}, known extensions of the Condorcet Jury Theorem with heterogeneous competences \cite{grofman83thirteen} directly apply to Naive markets in competitive equilibrium. In particular with $N \to \infty$ the probability that $p_\linCE(\b)$ is correct approaches $1$ for any $\b$ induced by a competence profile.
\end{remark}

\subsection{Weighted majority and Kelly markets}

A similar result to Theorem \ref{th:simple_maj} can be obtained for the weighted majority rule with individual weights proportional to $q_i - 0.5$, for each individual $i$. Such a rule is implemented in competitive equilibrium by Kelly markets. Intuitively, such markets then implement a majority election where individuals' weights are proportional to how better the individual is compared to an unbiased coin.

\begin{theorem} \label{th:kelly}
In Kelly markets, for any competence profile $\q$, and $\b \in \B_\q$:
\[
M^\w(\bi{\b}) = \bi{p_\logCE(\b)},
\]
where $\w$ is such that for all $i \in N$, $w_i \propto 2q_i - 1$.
\end{theorem}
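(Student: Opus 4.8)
The plan is to reduce the whole statement to a single algebraic identity, leaning on the fact (Lemma \ref{lemma:average}) that the Kelly equilibrium price is exactly the average belief, $p_\logCE(\b) = \frac{1}{n}\sum_{i \in N} b_i$. Since binarization only cares whether its argument is above, equal to, or below $0.5$, it suffices to compare the sign of $\sum_{i \in N} b_i - \frac{n}{2}$, which governs $\bi{p_\logCE(\b)}$ through \eqref{eq:binarization}, with the sign of $\sum_{i \in N} w_i v_i - \frac{1}{2}\sum_{i \in N} w_i$, which governs $M^\w(\v)$ through \eqref{eq:maj} for $\v = \bi{\b}$. Because proportionality is all that is required, I would fix the representative weights $w_i = 2q_i - 1$ throughout; the constant of proportionality cancels in each of the three comparisons.

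First I would rewrite each belief via Equation \eqref{eq:belief}. Writing $v_i = \bi{b_i} = \mathbbm{1}(b_i > 0.5)$ (individual beliefs are never exactly $0.5$), Equation \eqref{eq:belief} reads $b_i = v_i(2q_i - 1) + (1 - q_i) = w_i v_i + (1 - q_i)$, so that $v_i = 1$ recovers $b_i = q_i$ and $v_i = 0$ recovers $b_i = 1 - q_i$. Summing over $N$ gives $\sum_{i} b_i = \sum_{i} w_i v_i + n - \sum_{i} q_i$.

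Next I would apply the same substitution to the majority threshold. Since $w_i = 2q_i - 1$, we have $\frac{1}{2}\sum_{i} w_i = \sum_{i} q_i - \frac{n}{2}$, hence $\frac{n}{2} - \sum_{i} q_i = -\frac{1}{2}\sum_{i} w_i$. Combining this with the previous display yields the exact identity $\sum_{i} b_i - \frac{n}{2} = \sum_{i} w_i v_i - \frac{1}{2}\sum_{i} w_i$. The two quantities controlling the two sides of the theorem are thus not merely equal in sign but literally equal, so $p_\logCE(\b) > 0.5$ iff the $A$-weight strictly exceeds half the total weight, $p_\logCE(\b) < 0.5$ iff it is strictly below, and $p_\logCE(\b) = 0.5$ iff the two are equal. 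Matching these three cases against \eqref{eq:binarization} and \eqref{eq:maj} delivers $\bi{p_\logCE(\b)} = M^\w(\bi{\b})$ in every case.

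I do not expect a genuine obstacle here: the argument collapses to a one-line identity once Lemma \ref{lemma:average} and Equation \eqref{eq:belief} are in hand. The only points demanding care are (i) checking that the proportionality in $w_i \propto 2q_i - 1$ washes out of all three comparisons, justifying the use of $w_i = 2q_i - 1$, and (ii) handling the tie outcome $\set{0,1}$ cleanly — which the exact equality of the two governing quantities settles automatically, since equality on one side forces equality on the other.
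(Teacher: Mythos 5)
Your proof is correct and follows essentially the same route as the paper's: both invoke Lemma \ref{lemma:average} to replace $p_\logCE(\b)$ by the mean belief and then use Equation \eqref{eq:belief} to relate $\sum_{i \in N} b_i$ to the weighted vote tally $\sum_{i \in N} w_i v_i$ with $w_i = 2q_i - 1$. If anything, your formulation is slightly cleaner: you obtain the literal identity $\sum_{i \in N} b_i - \frac{n}{2} = \sum_{i \in N} w_i v_i - \frac{1}{2}\sum_{i \in N} w_i$, which settles all three cases (including ties) at once, whereas the paper manipulates the normalized vote share $R(\b)$ through a displayed chain in which one step is really a sign-preserving affine rescaling fixing $\frac{1}{2}$ rather than a genuine equality.
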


\begin{proof}
By the assumed weight profile, the normalized total weight of votes for $A$ is $\frac{\sum_{i \in N} \mathbbm{1}(b_i > 0.5) (2q_i - 1)}{\sum_{i \in N} (2q_i - 1)}$. For $A$ (respectively, $B$) to be chosen, this value should exceed (resp., fall short of) $\frac{1}{2}$. This is the case if and only if $\frac{\sum_{i \in N} (2q_i - 1)}{n} \cdot \left(\frac{\sum_{i \in N} \mathbbm{1}(b_i > 0.5)(2q_i - 1)}{\sum_{i \in N} (2q_i - 1)} - \frac{1}{2}  \right) + \frac{1}{2}$ exceeds $\frac{1}{2}$.  Let us denote this value $\rho(\b)$. The following series of equivalences shows that $\rho(\b)$ equals the average belief in $\b$.
\begin{align*}
\rho(\b)			& =  \frac{\sum_{i \in N} (2q_i - 1)}{n} \cdot \left(\frac{\sum_{i \in N} \mathbbm{1}(b_i > 0.5)(2q_i - 1)}{\sum_{i \in N} (2q_i - 1)} - \frac{1}{2}  \right) + \frac{1}{2}  \\
				& = \frac{\sum_{i \in N} (2q_i - 1)}{n} \cdot \left(\frac{2 \sum_{i \in N} \mathbbm{1}(b_i > 0.5) (2q_i - 1) - \sum_{i \in N} (2q_i - 1)}{2 \sum_{i \in N} (2q_i - 1)} \right) + \frac{1}{2}  \\
				& = \frac{\sum_{i \in N} \mathbbm{1}(b_i > 0.5) (2q_i - 1) - \sum_{i \in N} (q_i - \frac{1}{2}) + \frac{n}{2}}{n}  \\
				& = \frac{\sum_{i \in N} \mathbbm{1}(b_i > 0.5) (2q_i - 1) - \sum_{i \in N} q_i + \frac{n}{2} + \frac{n}{2}}{n}  \\
				& = \frac{\sum_{i \in N} \mathbbm{1}(b_i > 0.5) (2q_i - 1) + \sum_{i \in N} (1 - q_i)}{n}  \\
				& = \frac{\sum_{i \in N} \mathbbm{1}(b_i > 0.5) (2q_i - 1) + (1 - q_i)}{n}  \ \ \ \ \ \ \  ~\mbox{(recall Equation \eqref{eq:belief})}\\
				& = \frac{\sum_{i \in N} b_i}{n}. 
\end{align*}
From this, the definition of $M^\v(\bi{\b})$ (Equation \eqref{eq:maj}), and Lemma \ref{lemma:average} we obtain 
$
M^\v(\bi{\b}) = \bi{\frac{\sum_{i \in N} b_i}{n} } = \bi{p_\logCE(\b)},
$
as desired. \qed
\end{proof}

\begin{wrapfigure}{r}{4cm}
\vspace{-1.2cm}
\begin{center}
\begin{tikzpicture}[node distance=2cm, auto]
    \node (1) {$\b$};
    \node (2) [below of=1] {$\bi{\b}$};
    \node (3) [right of=1] {$p_\logCE(\b)$};
    \node (4) [below of=3] {$M^\w(\bi{\b})$};
    \draw[->] (1) to node {$\bi{}$} (2);
    \draw[->] (3) to node {$\bi{}$} (4);
    \draw[->] (1) to node {$\logCE$} (3);
    \draw[->] (2) to node {$M^\w$} (4);
\end{tikzpicture}
\end{center}
\vspace{-0.4cm}
\caption{ Weighted majority with weights $q_i-0.5$ and Kelly markets commute.}
\vspace{-1.5cm}
\end{wrapfigure}
Intuitively, the theorem tells us that  by implementing a weighted average of the beliefs of the traders, the competitive equilibrium price in markets with Kelly utilities behaves like a weighted majority where agents' weights are a linear function of their individual competence (specifically, $2 q_i - 1$). So, for any belief profile $\b$ induced by a competence profile $\q$ and by weights $w_i = 2q_i - 1$, we again a realization of Figure \ref{fig:commute} depicted in the commutative diagram on the right.


\section{Markets for Perfect Elections} \label{sec:perfect}

In this section we show how, by introducing a specific tax scheme, we can modify Kelly markets to make their equilibrium price implement a perfect weighted majority. Recall that we refer to perfect majority voting as weighted majority voting in which the weight of each individual is proportional to the natural logarithm of their competence ratio (recall Theorem \ref{th:grofman}). The intuition of our approach is the following: Theorem \ref{th:kelly}, has shown that Kelly markets correspond to elections where individuals are weighted proportionally to their competence in excess of $0.5$. In order to bring such weights closer to the ideal values of Theorem \ref{th:grofman} we need therefore to allow more competent agents to exert substantially more influence on the equilibrium price; we do so by designing a tax scheme which achieves such effect asymptotically in one parameter of the scheme.

\subsection{Taxing payouts}

We modify Equation \eqref{eq:kelly} by building in the effects of a tax scheme $T$ as follows:
\begin{equation} \label{eq:taxed}
U^A_i(p,s_i) = b_i \ln T\left(s_i \frac{1-p}{p} + 1 \right) + (1-b_i) \ln (1-s_i),
\end{equation}
where
\begin{equation} \label{eq:tax}
T(x) = \frac{1 - e^{-k x \frac{p}{1-p}}}{k \frac{p}{1-p}},
\end{equation}
with $k \in \mathbb{R}_{>0}$. Observe that, as parameter $k$ approaches $0$, $T(x)$ approaches $x$ and null taxation is therefore approached. 

To gain an intuition of the working of function $T$, it is useful to observe its effects on the agent's optimal investment strategy supposing the price $p = 0.5$. For $p = 0.5$ the optimal strategy of a Kelly trader is $2b_i - 1$ (Lemma \ref{lemma:kelly}). Function $T$ makes that strategy asymptotically proportional to $\ln\left(\frac{b_i}{1-b_i}\right)$ (Figure \ref{fig:tax}) as $k$ grows.

\begin{figure}[t]
\begin{center}
\includegraphics[scale=0.14]{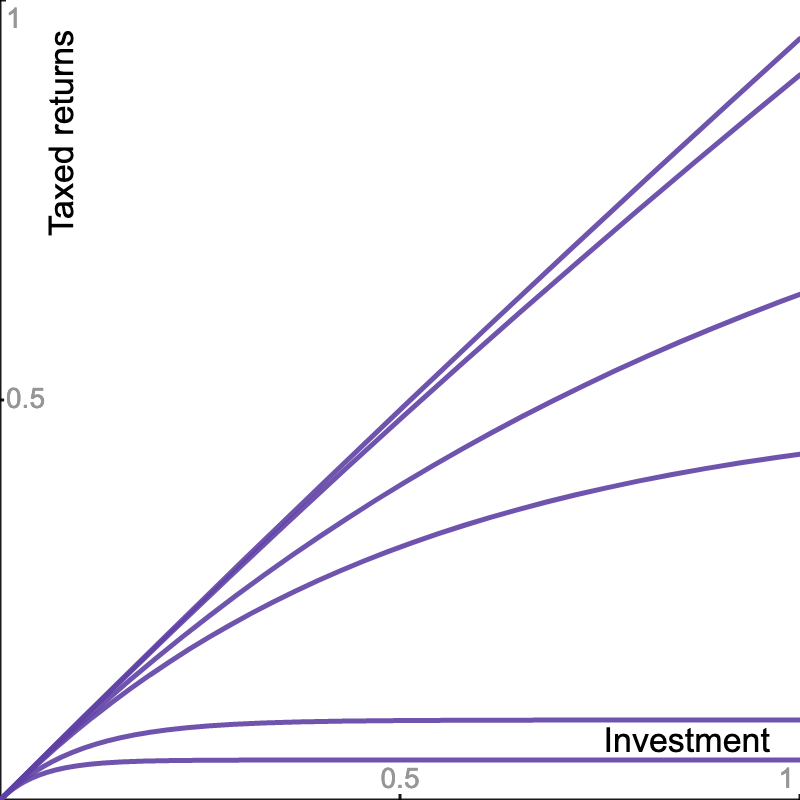} \hspace{1cm}
\includegraphics[scale=0.14]{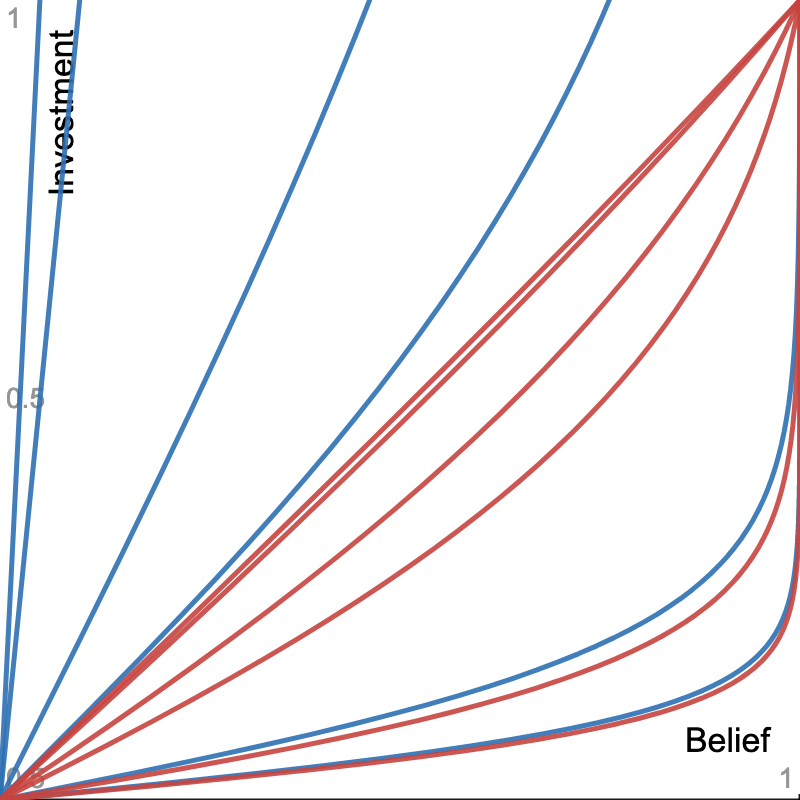}
\end{center}
\caption{Left: returns after taxation by $T$ as a function of investment (Equation \eqref{eq:taxed}). Right: investment strategy (\textcolor{red}{red}) approximating $\ln\left(\frac{b_i}{1-b_i}\right)\frac{1}{k}$ (\textcolor{blue}{blue}) as $k$ grows when price equals $0.5$. Functions plotted for $k \in \{ 0.1,0.2,1,2,10,20 \}$. }
\label{fig:tax}
\end{figure}

We call markets under the utility in Equation \eqref{eq:taxed} {\em taxed markets} and denote their equilibrium prize by $p_\taxCE(\b)$ for any belief profile $\b$.


\subsection{Equilibria in taxed Kelly markets}

Like for Naive and Kelly markets, we first determine the optimal strategy of the traders. We do that for $A$-traders, as the lemma for $B$-traders is symmetric.

\begin{lemma} \label{lemma:taxed}
In Taxed markets, for any $i \in N$, if $b_i > p$, then as $k \to \infty$, 
$$
\argmax_{x \in [0,1]} U_i^A(p,x) \propto \ln \left( \frac{1-p}{p} \cdot \frac{b_i}{1-b_i}\right).
$$
\end{lemma}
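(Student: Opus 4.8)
The plan is to directly maximize the taxed utility from Equation \eqref{eq:taxed} over $x \in [0,1]$ by taking a derivative, and then track the asymptotic behavior of the interior critical point as $k \to \infty$. First I would substitute the tax function $T$ from Equation \eqref{eq:tax} into the utility. Writing $r = \frac{1-p}{p}$ for brevity and letting the argument of $T$ be $g(x) = x r + 1 = x\frac{1-p}{p} + 1$, the utility becomes
\begin{equation*}
U_i^A(p,x) = b_i \ln\!\left( \frac{1 - e^{-k g(x) \frac{p}{1-p}}}{k \frac{p}{1-p}} \right) + (1-b_i)\ln(1-x).
\end{equation*}
The multiplicative constant $\frac{1}{k p / (1-p)}$ inside the first logarithm is independent of $x$, so it contributes an additive constant to $U_i^A$ and drops out when differentiating. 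This is the key simplification: only the numerator $1 - e^{-k g(x)\frac{p}{1-p}}$ matters for locating the argmax. Note also that $g(x)\frac{p}{1-p} = (x\frac{1-p}{p}+1)\frac{p}{1-p} = x + \frac{p}{1-p}$, so the exponent simplifies to $-k(x + \frac{p}{1-p})$, which is a clean linear function of $x$.

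\medskip

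\textbf{The first-order condition.}

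Next I would set $\frac{d}{dx} U_i^A(p,x) = 0$. Differentiating the first term, with the exponent equal to $-k(x+\frac{p}{1-p})$, gives $b_i \cdot \frac{k e^{-k(x + p/(1-p))}}{1 - e^{-k(x+p/(1-p))}}$; differentiating the second term gives $-\frac{1-b_i}{1-x}$. Setting the sum to zero yields the stationarity equation
\begin{equation*}
b_i \cdot \frac{k \, e^{-k(x + \frac{p}{1-p})}}{1 - e^{-k(x+\frac{p}{1-p})}} = \frac{1-b_i}{1-x}.
\end{equation*}
I would first check that for $b_i > p$ this interior critical point lies in $(0,1)$ and is a maximum (the condition $b_i > p$ should guarantee the trader wants to invest positively, paralleling the untaxed Kelly case in Lemma \ref{lemma:kelly}); concavity of $U_i^A$ in $x$ would confirm it is the global maximizer.

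\medskip

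\textbf{Asymptotics as $k \to \infty$.}

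The main work, and the expected obstacle, is extracting the $k \to \infty$ limit from the transcendental stationarity equation. The guiding observation is that as $k$ grows the optimal investment $x^*$ shrinks toward $0$ (the taxation penalizes large payouts heavily), so I expect $x^* k$ to converge to a finite positive constant. The plan is to posit the scaling $x^* = \frac{c}{k} + o(1/k)$ and substitute into the stationarity equation. On the right-hand side, $1 - x^* \to 1$, so the right side tends to $1-b_i$. On the left-hand side, the exponent becomes $-k(\frac{c}{k} + \frac{p}{1-p}) = -c - \frac{kp}{1-p}$; the term $e^{-kp/(1-p)}$ decays to zero, which forces a careful balancing of the $k$ prefactor against this exponential. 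The hard part will be handling this competition cleanly: I anticipate the denominator $1 - e^{-k(\cdots)} \to 1$, so the left side behaves like $b_i k \, e^{-c} e^{-kp/(1-p)}$, and matching this against $1-b_i$ determines how $c$ must itself depend on $k$ logarithmically. Solving the resulting balance for the leading behavior of $x^* k$ should, after taking logarithms, produce a term proportional to $\ln\frac{b_i}{1-b_i}$ plus a term proportional to $\ln\frac{1-p}{p}$, combining into $\ln\!\left(\frac{1-p}{p}\cdot\frac{b_i}{1-b_i}\right)$ as claimed. I would present the argument as an asymptotic expansion, being careful to state the proportionality (the ``$\propto$'' in the statement absorbs the shared $1/k$ factor and any additive constant), rather than claiming an exact closed form, since the claim is only an asymptotic proportionality. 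The delicate step throughout is justifying that the $o(1/k)$ correction terms are genuinely negligible in the limit, which may require an implicit-function-theorem argument or explicit two-sided bounds on $x^*$ to make rigorous.
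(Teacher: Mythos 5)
Your setup is faithful to Equation \eqref{eq:taxed} as literally written, and your stationarity equation
\[
b_i \, \frac{k\, e^{-k\left(x+\frac{p}{1-p}\right)}}{1-e^{-k\left(x+\frac{p}{1-p}\right)}} \;=\; \frac{1-b_i}{1-x}
\]
is the correct first-order condition for that reading. But the asymptotic balancing you anticipate in the last step cannot produce the claimed result, and this is where the argument breaks. With $x^*=c/k$ the left-hand side behaves like $b_i k e^{-c} e^{-kp/(1-p)}$, and since $e^{-kp/(1-p)}$ decays exponentially while the prefactor grows only linearly, the left side tends to $0$ for every fixed $c$; forcing equality with $1-b_i>0$ gives $c=\ln\frac{b_i}{1-b_i}+\ln k-\frac{kp}{1-p}\to-\infty$, i.e.\ no positive interior critical point. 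In fact, for all sufficiently large $k$ the marginal benefit of investing is bounded by $b_i k e^{-kp/(1-p)}/\bigl(1-e^{-kp/(1-p)}\bigr)\to 0$ while the marginal cost is at least $1-b_i$, so $U_i^A(p,\cdot)$ is strictly decreasing on $[0,1)$ and $\argmax_{x\in[0,1]}U_i^A(p,x)=\{0\}$ exactly. No $\ln\frac{1-p}{p}$ can emerge from your balance: that quantity sits inside an exponential in $k$, not a logarithm. Under the literal reading of Equation \eqref{eq:taxed}, the lemma as intended is simply false.

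The discrepancy is not really your fault: the paper's own proof differentiates a \emph{different} utility from the one displayed in Equation \eqref{eq:taxed}. Its first-order condition (Equation \eqref{eq:deriv}) is the one for
\[
U_i^A(p,s) \;=\; b_i \ln\!\left(1+T\!\left(s\,\tfrac{1-p}{p}\right)\right)+(1-b_i)\ln(1-s),
\]
that is, the tax applied only to the \emph{gain} $s\frac{1-p}{p}$, with payout $1+T(\mathrm{gain})$ --- not $T$ applied to the whole payout $s\frac{1-p}{p}+1$. Since $T(x+1)\neq 1+T(x)$, these are genuinely different functions. Under the gain-taxation form the exponent is just $-ks$ (the troublesome $-k\frac{p}{1-p}$ term never appears), and after your rescaling $y=ks$ the condition reads $\frac{k b_i e^{-y}}{k\frac{p}{1-p}+1-e^{-y}}=\frac{1-b_i}{1-y/k}$: now numerator and denominator both grow linearly in $k$, the ratio has a finite nonzero limit, and solving gives $y\to\ln\bigl(\frac{1-p}{p}\cdot\frac{b_i}{1-b_i}\bigr)$ as claimed. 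So to complete a correct proof you must either adopt that utility and then run exactly your rescaling argument (which then works and coincides with the paper's), or flag that the statement, read against Equation \eqref{eq:taxed} verbatim, fails because the argmax collapses to the boundary $0$.
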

\begin{proof}
We start from $i$'s utility, given by Equation \eqref{eq:taxed}. By setting $\frac{dU_i^A}{d s_i} = 0$ (first order condition) we obtain:
\begin{equation} \label{eq:deriv}
\frac{b T'(s \frac{1-p}{p}) \frac{1-p}{p}}{1 + T(s \frac{1-p}{p})} = \frac{1-b_i}{1-s_i}
\end{equation}
If we replace Equation \eqref{eq:tax} into Equation \eqref{eq:deriv}, we obtain:
\begin{equation}
    \frac{b e^{-k s_i} \frac{1-p}{p}}{1 + \frac{1 - e^{-k s_i}}{k \frac{p}{1-p}}} = \frac{1-b_i}{1-s_i}.
\end{equation}
and therefore
\begin{equation}
    \frac{k b e^{-k s_i}}{k \frac{p}{1-p} + 1 - e^{-k s_i}} = \frac{1-b}{1-s_i}.
\end{equation}
As $k$ approaches infinity, $s_i$ approaches zero. For this reason we rescale strategies by $k$ and consider a value $y = s k$. This allows us to understand the form to which strategies tend as they approach zero. We thus obtain
\begin{equation}
    \frac{k b e^{-y}}{k \frac{p}{1-p} + 1 - e^{-y}} = \frac{1-b}{1-\frac{y}{k}}.
\end{equation}
As $k$ approaches infinity this approaches
\begin{equation}
    \frac{b e^{-y}}{\frac{p}{1-p}} = (1-b),
\end{equation}
which can be rewritten in turn as
\begin{equation}
    y = \ln \left( \frac{1-p}{p} \frac{b}{1-b}\right),
\end{equation}
from which we conclude $s_i = \frac{1}{k} \log( \frac{1-p}{p} \frac{b}{1-b})$, as desired. \qed
\end{proof}
As $k$ tends to infinity, the optimal investment strategy will tend to $0$ for all agents. However, it will do so in such a way that as $k$ grows, the optimal investment strategy tends to be proportional to $\ln(\frac{1-p}{p} \cdot \frac{b_i}{1-b_i})$ as desired. 

So, as $k$ grows large, a strategy profile $\s$ is in competitive equilibrium in a taxed market with respect to price $p(\s)$ whenever Equation \eqref{eq:balance} is satisfied together with the condition identified by Lemma \ref{lemma:taxed}. We denote by $\s_\taxCE(\b)$ such competitive equilibrium and by $p_\taxCE(\b)$ the price at such equilibrium. We then obtain the following lemma.
\begin{lemma} \label{lemma:tax_price}
In Taxed markets, for any profile $\q$ and $\b \in \B_\q$, as $k \to \infty$,
$$
\ln\left(\frac{p_\taxCE(\b)}{1-p_\taxCE(\b)} \right) 
\propto
\sum_i^n \ln \left(\frac{b_i}{1-b_i} \right).
$$
\end{lemma}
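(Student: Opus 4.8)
The plan is to recover the limiting optimal strategies from Lemma~\ref{lemma:taxed} (and its $B$-symmetric counterpart), substitute them into the clearing condition~\eqref{eq:balance}, and then read off the relationship between the log-odds of the price and the aggregate log-odds of the beliefs. Writing $L_i = \ln\frac{b_i}{1-b_i}$ and $L_p = \ln\frac{p}{1-p}$ (so that $\frac{p}{1-p} = e^{L_p}$), Lemma~\ref{lemma:taxed} gives, for an $A$-trader ($b_i > p$, i.e. $L_i > L_p$), the limiting strategy $s_i^A = \frac1k(L_i - L_p)$, and by symmetry a $B$-trader ($b_i < p$, i.e. $L_i < L_p$) invests $s_i^B = \frac1k(L_p - L_i)$. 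Both are positive on their respective trader sets; agents with $b_i = p$ form a measure-zero case I would discard, and the degenerate all-on-one-side profiles are excluded by the null-price Remark~\ref{remark:null}, so a valid equilibrium has both sides populated.

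Writing $\Sigma_A = \sum_{b_i>p} s_i^A$ and $\Sigma_B = \sum_{b_i<p} s_i^B$, the clearing condition~\eqref{eq:balance} reads $\frac1p\Sigma_A = \frac1{1-p}\Sigma_B$, equivalently $\Sigma_A = e^{L_p}\Sigma_B$. The key step is to express the target quantity $\sum_i L_i$ through these aggregates, avoiding any attempt to solve for $p$ explicitly. Using $L_i = (L_i - L_p) + L_p$ on the $A$-side and $L_i = -(L_p - L_i) + L_p$ on the $B$-side, and noting $k\Sigma_A = \sum_{b_i>p}(L_i-L_p)$ and $k\Sigma_B = \sum_{b_i<p}(L_p-L_i)$, one obtains
\begin{equation*}
\sum_{i} L_i \;=\; k(\Sigma_A - \Sigma_B) + n L_p \;=\; k\,\Sigma_B\,(e^{L_p}-1) + n L_p,
\end{equation*}
where the last equality substitutes the clearing relation $\Sigma_A = e^{L_p}\Sigma_B$.

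From this identity the conclusion is immediate. Since $k>0$, $n\ge 1$, and $\Sigma_B \ge 0$, the term $k\Sigma_B(e^{L_p}-1)$ always carries the sign of $L_p$ (it is $\ge 0$ when $L_p>0$, $\le 0$ when $L_p<0$, and $0$ when $L_p=0$), while $nL_p$ carries it strictly; hence $\mathrm{sign}\big(\sum_i L_i\big) = \mathrm{sign}(L_p)$, with $\sum_i L_i = 0 \iff L_p = 0$. This is exactly the asserted proportionality, read as ``$L_p$ is a positive, profile-dependent multiple of $\sum_i L_i$'' --- the operative meaning also used in Theorem~\ref{th:grofman}, where the scale of the weights is irrelevant to the majority outcome. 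It is precisely what the decision-theoretic statement needs, since it forces $\bi{p_\taxCE(\b)}$ to agree with the sign of $\sum_i \ln\frac{b_i}{1-b_i}$.

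The main obstacle I anticipate is the meaning of ``$\propto$'' itself. Because the partition into $A$- and $B$-traders and the value of $\Sigma_B$ both depend on $L_p$, the relation between $L_p$ and $\sum_i L_i$ is implicit and there is \emph{no} universal constant of proportionality; in particular the naive guess $L_p = \frac1n\sum_i L_i$, which would follow from dropping the weights $\frac1p,\frac1{1-p}$ in~\eqref{eq:balance}, is false in general (it forces $p=\tfrac12$, whereas generically $p\not\to\tfrac12$ even as $k\to\infty$). The displayed identity is exactly what circumvents this: instead of computing $L_p$, it repackages $\sum_i L_i$ as a sign-preserving combination of $L_p$. A secondary point to pin down is that the $k\to\infty$ limiting equilibrium price $p_\taxCE(\b)$ is well-defined, which I would inherit from the equilibrium characterization stated immediately before the lemma.
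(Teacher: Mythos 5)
Your proof is correct under the only reading of $\propto$ that can be sustained---a sign-preserving, profile-dependent proportionality---and it takes a genuinely different route from the paper's; in fact it is more careful than the paper's own argument. The paper, like you, substitutes the limiting strategies of Lemma \ref{lemma:taxed} into the clearing condition \eqref{eq:balance}, but it then cancels the unequal prefactors $\frac{1}{p}$ and $\frac{1}{1-p}$ on the two sides, concluding $0 = \sum_{i}\ln\bigl(\frac{1-p}{p}\frac{b_i}{1-b_i}\bigr)$ and hence the exact identity $\ln\frac{p}{1-p} = \frac{1}{n}\sum_i \ln\frac{b_i}{1-b_i}$, i.e.\ proportionality with the universal constant $\frac{1}{n}$. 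Your ``obstacle'' paragraph pinpoints exactly why that cancellation is illegitimate: in your notation the paper's identity is equivalent to $\Sigma_A = \Sigma_B$, which combined with the genuine clearing relation $\Sigma_A = e^{L_p}\Sigma_B$ forces $p=\frac{1}{2}$ (or an empty market), so the exact $\frac{1}{n}$-relation can hold only in the knife-edge case $\sum_i L_i = 0$. Your identity
\begin{equation*}
\sum_i L_i \;=\; k\,\Sigma_B\,\bigl(e^{L_p}-1\bigr) + n L_p
\end{equation*}
keeps the price weights and extracts precisely what is true and what the downstream Theorem \ref{th:ptaxed} actually needs, namely that $\ln\frac{p_\taxCE(\b)}{1-p_\taxCE(\b)}$ and $\sum_i \ln\frac{b_i}{1-b_i}$ always share the same sign (and vanish together). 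The trade-off between the two routes: the paper's would buy a clean closed form for the equilibrium log-odds, but the step producing it is unsound; yours buys a correct, constant-free statement that still makes the commutation result go through. Two small remarks. First, agents with $b_i = p$ need not be discarded: they contribute $L_p$ to $\sum_i L_i$ and nothing to $\Sigma_A$ or $\Sigma_B$, so your identity holds verbatim with them included. Second, you rely on the $B$-side analogue of Lemma \ref{lemma:taxed}, which the paper only asserts by symmetry; it is worth one line noting that the symmetry is the swap $(p, b_i) \leftrightarrow (1-p, 1-b_i)$, giving $s_i^B = \frac{1}{k}(L_p - L_i)$.
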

\begin{proof}
To lighten notation we write $p$ for $p_\taxCE(\b)$. From the equilibrium condition (Equation \eqref{eq:balance}) and Lemma \ref{lemma:taxed} we have that
\begin{align}
\frac{1}{p} \sum_{i \in N^A} \ln \frac{b_i}{1-b_i} & = \frac{1}{1-p} \sum_{i \in N^B}  \frac{1 - b_i}{b_i},  \label{eq:proof}
\end{align}
where $N^A = \set{i \in N \mid b_i > p}$ and $N^B = \set{i \in N \mid b_i < p}$. From the above we obtain
\begin{equation}
    0 = \sum_i^N \ln \left( \frac{1-p}{p} \frac{b_i}{1-b_i} \right),
\end{equation}
which rewrites to
\begin{equation}
    \ln \left(\frac{p}{1-p} \right) = \frac{1}{N} \sum_i^N \ln \left(\frac{b_i}{1-b_i} \right),
\end{equation}
as desired. \qed
\end{proof}
That is, the equilibrium price ratio between $A$ and $B$ securities in a taxed market tends to be proportional, in logarithmic scale, to the average belief ratio. 

\begin{theorem} \label{th:ptaxed}
In Taxed markets, for any profile $\q$, $\b \in \B_\q$ and as $k \to \infty$,
$$
M^{\w}(\bi{\b}) = \bi{p_\taxCE(\b)},
$$
where $\w$ is such that for all $i \in N$ $w_i \propto \ln \frac{q_i}{1-q_1}$.
\end{theorem}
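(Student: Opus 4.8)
The plan is to reduce the claimed equality of the two set-valued outputs to a comparison of the \emph{signs} of two real numbers, one controlling the market side and one controlling the voting side, and then to show these numbers coincide up to a positive factor. The key observation making this reduction legitimate is that binarization $\bi{\ }$ depends only on the position of its argument relative to $0.5$, and that $p > 0.5$ holds exactly when $\ln\frac{p}{1-p} > 0$. Consequently the caveat ``as $k \to \infty$'' and the proportionality constant in Lemma \ref{lemma:tax_price} are harmless: neither can change the sign of the equilibrium log-odds, hence neither can change the binarized outcome.

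First I would use Lemma \ref{lemma:tax_price} to read off the market side. Since $\ln\frac{p_\taxCE(\b)}{1-p_\taxCE(\b)}$ is, in the limit, a positive multiple of $\sum_{i=1}^n \ln\frac{b_i}{1-b_i}$, we obtain that $\bi{p_\taxCE(\b)}$ equals $\{1\}$, $\{0\}$, or $\{0,1\}$ according to whether the quantity $\Sigma := \sum_{i=1}^n \ln\frac{b_i}{1-b_i}$ is positive, negative, or zero.

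Next I would rewrite each summand using the belief structure of Equation \eqref{eq:belief}. For an agent with $b_i > 0.5$ we have $b_i = q_i$, so $\ln\frac{b_i}{1-b_i} = \ln\frac{q_i}{1-q_i}$; for $b_i < 0.5$ we have $b_i = 1 - q_i$, so $\ln\frac{b_i}{1-b_i} = -\ln\frac{q_i}{1-q_i}$. Setting $w_i = \ln\frac{q_i}{1-q_i}$, which is positive because $q_i \in (0.5,1)$, and recalling that the vote is $v_i = \bi{b_i}$, both cases are captured uniformly by the single identity $\ln\frac{b_i}{1-b_i} = w_i(2v_i - 1)$. Summing over $N$ then yields $\Sigma = \sum_{i\in N} w_i(2v_i - 1) = \sum_{i : v_i = 1} w_i - \sum_{i : v_i = 0} w_i$.

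Finally I would match this against the voting side. The weighted majority threshold $\sum_{i\in N} w_i v_i$ versus $\tfrac12 \sum_{i\in N} w_i$ in Equation \eqref{eq:maj} is equivalent, after multiplying by two and subtracting $\sum_{i\in N} w_i$, to comparing $\sum_{i\in N} w_i(2v_i - 1)$ against $0$, i.e.\ to the sign of exactly $\Sigma$. Hence $M^\w(\bi{\b})$ selects $\{1\}$, $\{0\}$, or the tie $\{0,1\}$ precisely when $\Sigma$ is positive, negative, or zero, which is the same trichotomy governing $\bi{p_\taxCE(\b)}$; the two therefore agree on every $\b \in \B_\q$. The heavy lifting is done entirely by Lemma \ref{lemma:tax_price}, so no genuinely hard step remains; the only point requiring care is the bookkeeping around the limit and the proportionality, namely arguing that binarization is insensitive both to the positive constant $\tfrac1n$ and to the $k \to \infty$ regime, which holds because the sign of the log-odds—and thus the location of $p_\taxCE(\b)$ relative to $0.5$—is preserved under both.
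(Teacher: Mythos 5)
Your proposal is correct and follows essentially the same route as the paper's own proof: both reduce the claim to the sign of $\sum_{i}\ln\frac{b_i}{1-b_i}$ via Lemma \ref{lemma:tax_price}, then use Equation \eqref{eq:belief} to rewrite each summand as $\pm\ln\frac{q_i}{1-q_i}$ according to the agent's vote, and finally observe that this sign trichotomy is exactly the one governing the weighted majority outcome in Equation \eqref{eq:maj}. Your explicit identity $\ln\frac{b_i}{1-b_i} = w_i(2v_i-1)$ and the remark that the positive factor $\tfrac{1}{n}$ cannot affect binarization are just slightly more careful bookkeeping of steps the paper leaves implicit.
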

\begin{proof}
First of all, observe that: $\bi{p_\taxCE(\b)} = \set{1}$ iff $\ln\left(\frac{p_\taxCE(\b)}{1-p_\taxCE(\b)} \right)>0$; $\bi{p_\taxCE(\b)} = \set{0,1}$ iff $\ln\left(\frac{p_\taxCE(\b)}{1-p_\taxCE(\b)} \right) = 0$; and $\bi{p_\taxCE(\b)} = \set{0}$ iff $\ln\left(\frac{p_\taxCE(\b)}{1-p_\taxCE(\b)} \right)<0$. Then, by Lemma \ref{lemma:tax_price}, Equation \eqref{eq:belief} and some algebra we obtain the following relations:
\begin{align*}
\ln\left(\frac{p_\taxCE(\b)}{1-p_\taxCE(\b)} \right) 
& \propto \sum_i^n \ln \left(\frac{b_i}{1-b_i} \right) \\
& = \sum_i^n \mathbbm{1}(b_i > 0.5) \cdot \ln \left(\frac{q_i}{1-q_i} \right)  \\
& = \sum_{i: b_i > 0.5} \ln \left(\frac{q_i}{1-q_i} \right) + \sum_{i: b_i < 0.5} \ln \left(\frac{1- q_i}{q_i} \right) \\
& = \sum_{i: b_i > 0.5} \ln \left(\frac{q_i}{1-q_i} \right) - \sum_{i: b_i < 0.5} \ln \left(\frac{q_i}{1- q_i} \right).
\end{align*}
The last expression is: positive whenever weighted voting with optimal weights returns $\set{1}$; negative whenever it returns $\set{0}$; and $0$ whenever it returns $\set{0,1}$ (Equation \eqref{eq:maj}). \qed
\end{proof}

\begin{wrapfigure}{r}{4cm}
\vspace{-1.2cm}
\begin{center}
\begin{tikzpicture}[node distance=2cm, auto]
    \node (1) {$\b$};
    \node (2) [below of=1] {$\bi{\b}$};
    \node (3) [right of=1] {$p_\taxCE(\b)$};
    \node (4) [below of=3] {$M^\w(\bi{\b})$};
    \draw[->] (1) to node {$\bi{}$} (2);
    \draw[->] (3) to node {$\bi{}$} (4);
    \draw[->] (1) to node {$\taxCE$} (3);
    \draw[->] (2) to node {$M^\w$} (4);
\end{tikzpicture}
\end{center}
\vspace{-0.4cm}
\caption{As tax parameter $k \to \infty$, perfect majority and taxed Kelly markets commute.}
\vspace{-0.5cm}
\end{wrapfigure}
This last result shows that elections that are perfect from a truth-tracking perspective (Theorem \ref{th:grofman}) can be implemented increasingly faithfully by markets with Kelly utilities, once the taxation scheme $T$ is applied and the taxation parameter $k$ in Equation \eqref{eq:tax} grows larger and, therefore, that taxation grows. So, for any belief profile $\b$ induced by a competence profile $\q$ and weights $w_i = \frac{q_i}{1-q_i}$, we obtain a realization of Figure \ref{fig:commute} consisting of the commutative diagram on the right, under the assumption that $k$ tends to infinity.


\section{Numerical examples} \label{sec:examples}

Assume $N = \set{1, \ldots, 5}$ with competence profile $\q = \tuple{0.9, 0.7, 0.6, 0.6, 0.6}$. Assume further that only the first and last agent receive signal $A$ while the rest receives signal $B$. This gives us the following belief profile by Bayesian update: $\b = \tuple{0.9, 0.3, 0.4, 0.4, 0.6}$.  These beliefs result in the voting profile $\v = \tuple{1, 0, 0, 0, 1}$, from which we obtain: 
\begin{itemize}

\item $M^\triw(\v) = \set{0}$, that is, standard majority selects $B$;

\item $M^\w(\v) = \set{1}$, where $\w = \tuple{0.8, 0.4, 0.2, 0.2, 0.2}$ (weight profile given by $2 q_i - 1$), as $0.8 + 0.2 - (0.4 + 0.2 + 0.2) > 0$, that is, the sum of weights of the first and last agents are larger then the sum of weights of the others;

\item $M^\w(\v) = \set{1}$, where $\w = \tuple{\ln \frac{0.9}{0.1}, \ln \frac{0.7}{0.3}, \ln \frac{0.6}{0.4}, \ln \frac{0.6}{0.4}, \ln \frac{0.6}{0.4}}$ (optimal weights), as the following expression is positive:
\begin{equation} \label{eq:example_maj}
\ln \frac{0.9}{0.1} + \ln \frac{0.6}{0.4} - \left(\ln \frac{0.7}{0.3}  + 2 \cdot \ln \frac{0.6}{0.4} \right).
\end{equation}
\end{itemize}

We move now to the choices made by the markets based on equilibrium prices. We have that: by Algorithm \ref{algo:eq}, $p_\linCE(\b) = \frac{2}{5}$ (Naive market equilibrium) where the two agents who received the $A$ signal invest all their endowment in $A$-securities, and the remaining agents invest all their endowment in $B$-securities; $p_\logCE(\b) = \frac{2.6}{5}$ (Kelly market equilibrium) corresponding to the mean belief in $\b$. So, a Naive market given the above beliefs selects $B$ while the Kelly market selects $A$ by a very small margin. As to the taxed markets, our results do not give us a closed expression for $p_\taxCE(\b)$ but rather determine whether the price favors $A$- or $B$-securities based on the logarithm of the ratio between the two prices, which is proportional to the logarithm of the weighed support for $A$ and for $B$ when the taxation parameter $k$ tends to infinity (Theorem \ref{th:ptaxed}). In this example, we thus have that $\ln \frac{p_\taxCE(\b)}{1 - p_\taxCE(\b)}$ is proportional to Equation \eqref{eq:example_maj} and therefore points to security $A$.

\medskip

Assume $N = \set{1, \ldots, 4}$ with competence profile $\q = \tuple{0.8, 0.6, 0.6, 0.6}$ and that only the first agent receives signal $A$ while the rest receives signal $B$. This gives us the following belief profile: $\b = \tuple{0.8, 0.4, 0.4, 0.4}$. These beliefs result in the voting profile $\v = \tuple{1, 0, 0, 0}$, from which we obtain: 
\begin{itemize}

\item $M^\triw(\v) = \set{0}$, that is, standard majority selects $B$

\item $M^\w(\v) = \set{0,1}$ where $\w = \tuple{0.6, 0.2, 0.2, 0.2}$ (weight profile given by $2 q_i - 1$) as $0.6  - (0.2 + 0.2 + 0.2) =  0$. That is, we have a split weighted majority. 

\item $M^\w(\v) = \set{1}$ where $\w = \tuple{\ln \frac{0.8}{0.2}, \ln \frac{0.6}{0.4}, \ln \frac{0.6}{0.4}, \ln \frac{0.6}{0.4}}$ (optimal weights) as
\begin{equation} \label{eq:example_maj2}
\ln \frac{0.8}{0.2} -  3 \cdot \ln \frac{0.6}{0.4}
\end{equation}
is positive.
\end{itemize}

As to equilibrium prices, by applying Algorithm \ref{algo:eq}, we have that also in this case $p_\linCE(\b) = \frac{2}{5}$ (Naive market equilibrium). This price equals the posterior beliefs of the three agents that receive signal $B$. By the algorithm, the agent receiving signal $A$ invests all its wealth in $A$, one of the agents receiving signal $B$ invests $\frac{1}{3}$ of their wealth in $A$ (to guarantee market clearing at that price, line 10 of Algorithm \ref{algo:eq}), and the remaining agents invest all their endowment in $B$-securities. The equilibrium price in Kelly markets is in this case $0.5$ (mean belief). So, a Naive market given the above beliefs selects $B$ while the Kelly market remains undecided. In taxed markets, we have that $\ln \frac{p_\taxCE(\b)}{1 - p_\taxCE(\b)}$ is proportional to Equation \eqref{eq:example_maj2} and therefore points to security $A$.


\section{Discussion and Outlook} \label{sec:conc}

Our paper is the first one to establish a formal link between voting and information markets from an epistemic social choice perspective. The link consists specifically of correspondence results between weighted majority voting on the one hand, and information markets under three types of utility on the other. Such results open up the possibility, in principle, to implement weighted majority voting with strong epistemic guarantees even without having access to individual competences, because such information becomes indirectly available in the market via the equilibrium price. Notice that, in particular, while it may be difficult to elicit truthful weights from agents, investment strategies are subject to the natural incentive of maximizing investment returns. Whether this can prove advantageous also in practice, for instance in the setting of classification markets \cite{barbu12introduction} or voting-based ensembles \cite{cornelio22voting}, should be object of future research.

\medskip

The study we presented is subject to at least four main limitations. First, our analysis inherits all assumptions built into standard jury theorems, in particular: jurors' independence; homogeneous priors; equivalence of type-$1$ and type-$2$ errors in jurors' competences; binary events. Future research should try to lift our correspondence to more general settings relaxing the above assumptions (see \cite{dietrich21jury} for a recent overview, and \cite{pivato13voting,pivato17epistemic} for more general frameworks for epistemic social choice).
Second, our study limited itself to one-shot interactions. However, markets and specifically Kelly betting make most sense in a context of iterated decisions. Extending our results to the iterated setting, along the lines followed for instance in \cite{beygelzimer2012learning}, is also a natural avenue for future research. Third, our market model makes use of the notion of competitive equilibrium. Although such notion of equilibrium is standard in information markets, it responds to the intuition that individuals operate in a large group and, therefore, behave as price takers. We consider it interesting to study how different notions of equilibrium that do not make such assumption (e.g., Nash equilibrium), would behave within our framework. Fourth, our analysis assumes very stylized utility functions which are identical for all agents. This is of course highly restrictive and our results would be substantially strengthened if lifted to more general, and possibly heterogeneous, classes of utilities.


\begin{credits}
\subsubsection{\ackname}
This research was (partially) funded by the Hybrid Intelligence Center, a 10-year programme
funded by the Dutch Ministry of Education, Culture and Science through the Netherlands
Organisation for Scientific Research, \url{https://hybrid-intelligence-centre.nl}, grant number 024.004.022. Davide Grossi wishes to also thank Universit\'{e} Paris Dauphine and the Netherlands Institute for Advanced Studies (NIAS), where parts of this research were completed. We are grateful to Marcus Pivato, Feline Lindeboom and the anonymous reviewers of COMSOC'23 and SAGT'24 for many useful comments on earlier versions of the paper.
\end{credits}



\appendix


\section{Auxiliary result: full investment constraint is without loss of generality} \label{sec:supp}

Let us define the investment strategy of a trader $i$ by the pair $s_i = (s_i^A, s_i^B)$ where $s_i^A \in [0,1]$ is the amount of endowment invested in $A$-securities (similarly for $s_i^B$). Let then $c_i = w_i - (s_i^A + s_i^B)$, that is, the unspent endowment of $i$ given strategy $(s_i^A, s_i^B)$. The assumption that traders invest their full endowment amounts to $s_i^A + s_i^B = 1$. We can therefore simply refer to $s^A_i$ as $s$ and to $s^B_i$ as $1-s$. 

In Naive markets, it is a corollary of Lemma \ref{lemma:naive} that, in equilibrium, any strategy in which traders invest all their endowment is equivalent to a strategy investing in at most one security. 

For Kelly markets we can prove that the same observation holds. Under a full investment assumption Equation \eqref{eq:kelly} needs to be modified to:
\begin{equation}
U_i(p,s) = b_i \ln \left(\frac{s}{p} \right) + (1-b_i) \ln \left(\frac{1 - s}{1-p} \right). \label{eq:kelly_full}
\end{equation} 
which can be further generalized to 
\begin{equation}
U_i(p,s) = b_i \ln \left(\frac{s}{p} + c \right) + (1-b_i) \ln \left(\frac{1 - s}{1-p} + c \right). \label{eq:kelly_free}
\end{equation} 
when $i$ can invest any amount of endowment in $A$- or $B$-securities and still keep unspent endowment $c$.

\begin{proposition} \label{prop:wlog}
In Kelly markets, for all $i \in N$, $p \in [0,1]$ and $s_i$ such that $s_i^A + s_i^B = 1$ (whole endowment is spent) there exists $t_i = (t^A_i, 0)$ or $t_i = (0, t_i^B)$ which yield the same utility as $s_i$, and vice versa.
\end{proposition}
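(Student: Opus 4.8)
The plan is to exploit the fact that a Kelly trader's expected utility depends on a strategy only through the pair of state-contingent payouts it induces, and then to show that full-investment strategies and one-sided strategies sweep out exactly the same set of payout pairs. Writing $z_A$ for the payout when $A$ is correct and $z_B$ for the payout when $B$ is correct, the expected Kelly utility of any strategy is $b_i \ln z_A + (1-b_i)\ln z_B$, a function of the pair $(z_A, z_B)$ alone. Hence two strategies inducing the same payout pair yield the same utility (for every $b_i$), and to prove the proposition it suffices to match payout pairs.

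First I would record the payout pair of each strategy type. A full-investment strategy $(s, 1-s)$ gives $(z_A, z_B) = \left(\frac{s}{p}, \frac{1-s}{1-p}\right)$; a one-sided $A$-strategy investing $t$ and keeping $1-t$ in cash gives $\left(\frac{t}{p}+1-t,\, 1-t\right)$ (this is the payout underlying Equation \eqref{eq:kelly}); and a one-sided $B$-strategy gives $\left(1-t,\, \frac{t}{1-p}+1-t\right)$. The key observation is that all three families satisfy the single budget identity $p\,z_A + (1-p)\,z_B = 1$ with $z_A, z_B \geq 0$; that is, every admissible strategy lands on the same fair-odds segment in payout space.

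Next I would compare the images of the two families on that segment. As $s$ ranges over $[0,1]$, the full-investment payout sweeps the entire segment from $(0, \frac{1}{1-p})$ to $(\frac{1}{p}, 0)$, passing through $(1,1)$ at $s=p$. The one-sided $A$-strategies sweep the sub-segment from $(1,1)$ (at $t=0$) to $(\frac{1}{p},0)$ (at $t=1$), and the one-sided $B$-strategies sweep the complementary sub-segment from $(1,1)$ to $(0,\frac{1}{1-p})$; their union is again the whole segment. This yields the explicit correspondence: a full-investment $(s,1-s)$ with $s>p$ matches the one-sided $A$-strategy with $t = \frac{s-p}{1-p}$, with $s<p$ it matches the one-sided $B$-strategy with $t = \frac{p-s}{p}$, and $s=p$ matches the null strategy $t=0$. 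Conversely, a one-sided strategy with payout pair $(z_A,z_B)$ is matched by the full-investment strategy $s = p\,z_A$, which lies in $[0,1]$ precisely because the one-sided payout lies on the segment ($z_A \in [1,\frac{1}{p}]$ for an $A$-strategy, $z_A \in [0,1]$ for a $B$-strategy). Verifying that these substitutions reproduce the stated payout pairs is routine algebra.

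I do not expect a serious obstacle; the only care needed is at the degenerate endpoints where one payout is $0$ and the logarithm diverges. These are matched on both sides (for instance full investment at $s=0$ and the one-sided $B$-strategy at $t=1$ both give $z_A=0$), so the $-\infty$ utilities coincide as well, and in any case Remark \ref{remark:null} excludes $p \in \{0,1\}$ so interior equilibrium strategies stay away from the corners. The main conceptual step is simply recognizing that the budget constraint forces every strategy onto one line, after which the equivalence reduces to comparing two parametrizations of the same segment.
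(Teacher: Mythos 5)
Your proof is correct, but it follows a genuinely different route from the paper's. The paper argues through first-order conditions: it computes the optimal full-investment strategy of Equation \eqref{eq:kelly_full} (namely $s=b_i$), claims that this particular strategy can be replicated by a one-sided investment plus unspent cash via the unrestricted utility of Equation \eqref{eq:kelly_free}, and dispatches the converse with ``the same reasoning.'' You instead ignore optimality entirely and work at the level of state-contingent payout pairs: every admissible strategy lands on the budget segment $p\,z_A + (1-p)\,z_B = 1$, the full-investment family and the union of the two one-sided families sweep exactly this segment, and matching points on the segment ($t = \frac{s-p}{1-p}$ for $s>p$, $t=\frac{p-s}{p}$ for $s<p$, conversely $s = p\,z_A$) gives a utility-preserving correspondence valid for \emph{every} strategy and every belief. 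This buys you two things. First, it proves the proposition at its stated generality: the quantifier ranges over all full-investment strategies $s_i$, not just the optimizer, and your correspondence covers all of them, including the ``vice versa'' direction, which the paper only gestures at. Second, your explicit algebra tightens the replication step: the one-sided strategy matching the full-investment optimum $s=b_i$ is $t = \frac{b_i-p}{1-p}$ --- precisely the Kelly fraction of Lemma \ref{lemma:kelly} --- whereas the paper's sketch names ``$b_i$ in $A$-securities with $1-b_i$ kept as cash,'' whose payouts $\bigl(\frac{b_i}{p}+1-b_i,\; 1-b_i\bigr)$ do not coincide with the full-investment payouts $\bigl(\frac{b_i}{p},\; \frac{1-b_i}{1-p}\bigr)$; your calculation supplies the correct replicating strategy and thereby closes that gap. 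The only thing the paper's route buys in exchange is immediacy for its intended application (in equilibrium only optimal strategies matter, so anchoring at the optimum suffices there), but as a proof of the proposition as stated, yours is the more complete and more elementary argument.
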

\begin{proof}
First observe that the first-order condition for Equation \eqref{eq:kelly_full} gives us the optimal investment strategy, which is $b_i$. Such optimal strategy will give $i$ $\frac{b_i}{p}$ $A$-securities and $\frac{1- b_i}{1-p}$ $B$-securities. 
\fbox{Left to right} Suppose that $b_i > p$. It follows that $\frac{b_i}{p} > \frac{1- b_i}{1-p}$, that is, $i$ purchases more $A$-securities than $B$-securities. Were $i$ not be compelled to invest all her endowment, she would invest according to Equation \eqref{eq:kelly_free}. Now it is easy to see that if $i$ invests $b_i$ in $A$-securities, $0$ in $B$-securities, and keeps $1-b_i$ as cash, she obtains the same payoff by Equation \eqref{eq:kelly_free} as she would obtain by Equation \eqref{eq:kelly_full} when she invests $1-b_i$ in $B$-securities. There exists therefore $t_i = (t^A_i, 0)$ yielding the same utility. The same reasoning applies, symmetrically, for $b_i < p$. \fbox{Right to left}. Given an investment strategy $t_i$ investing in only one security, a utility-equivalent full-investment strategy can be constructed using the same reasoning used in the previous case via the utility for unrestricted investments of Equation \eqref{eq:kelly_free}.
\end{proof}


\section{Proofs of Section \ref{sec:average}} \label{app:average}

\begin{proof}[Proof of Lemma \ref{lemma:kelly}]
We reason for $A$. The argument for $B$ is symmetric.
First we compute the first derivative of $U^A_i(p,x)$:
\begin{align*}
\frac{d}{d x} U^A_i(p,x) & = \frac{d}{d x} \left( b_i \ln (x \frac{1-p}{p} + 1) + (1-b_i) \ln (1-x) \right) \\
				    & = b_i \cdot \frac{\frac{d}{d x}(x \frac{1-p}{p} + 1)}{x \frac{1-p}{p} + 1} + (1-b_i) \cdot \frac{\frac{d}{d x}(1-x)}{1-x} \\
				    & = b_i \cdot \frac{\frac{1-p}{p}}{x \frac{1-p}{p} + 1} + (1-b_i) \cdot \frac{-1}{1-x} \\
				    & = \frac{b_i (1-p)}{x+p(1-x)}-\frac{1-b_i}{1-x}.
\end{align*}
The first-order condition is $\frac{b_i (1-p)}{x+p(1-x)}-\frac{1-b_i}{1-x} = 0$, that is, $\frac{b_i-p}{1-b} = \frac{x}{1-x}$. Solving for $x$ we obtain $\frac{b_i-p}{1-b_i}$. We need then to check the second-order condition. The second derivative is $- \frac{1-b_i}{(1-x)^2} - \frac{b_i(1-p)^2}{(p(1-x)+x)^2}$, which is negative at every point and therefore, as desired, also at $\frac{b_i-p}{1-b_i}$. Now observe that $\frac{b_i-p}{1-b_i}$ is positive whenever $b_i > p$ and equals $0$ when $b_i = p$.
When $b_i < p$ then the expression is negative, and the trader takes the opposite side of the bet, investing in $B$. 
\end{proof}

\begin{proof}[Proof of Lemma \ref{lemma:average}]
Let us denote $p_\logCE(\b)$ by $p$. By Lemma \ref{lemma:kelly}, Equation \eqref{eq:balance} becomes:
\begin{align}
\frac{1}{p} \sum_{i \in N^A}  \frac{b_i - p}{1-p} & = \frac{1}{1-p} \sum_{i \in N^B}  \frac{p - b_i}{p}  \label{eq:proof}
\end{align}
where $N^A = \set{i \in N \mid b_i > p}$ and $N^B = \set{i \in N \mid b_i < p}$. Notice we can assume $N = N^A \cup N^B$ as any agent with belief equal to the price does not invest in either $A$- or $B$-securities in Kelly markets (by Lemma \ref{lemma:kelly}). From Equation \eqref{eq:proof}, by some basic algebra, we obtain the following series of equations
\begin{align*}
\frac{1-p}{p} \sum_{i \in N^A}  \frac{b_i - p}{1-p} & =  \sum_{i \in N^B}  \frac{p - b_i}{p} \\
\sum_{i \in N^A}  b_i - p & =  \sum_{i \in N^B}  p - b_i \\
n p & = \sum_{i \in N} b_i \\
p & = \frac{1}{n} \sum_{i \in N} b_i,
\end{align*}
thereby proving the claim.
\end{proof}



\bibliographystyle{plain}


\end{document}